%
\documentclass[11pt]{article}

\usepackage{amssymb}
\usepackage{amsmath}
\usepackage{graphicx}
\usepackage{hyperref}



\setlength{\topmargin}{0cm} \setlength{\topskip}{0cm}
\setlength{\footskip}{1cm} \setlength{\headsep}{0cm}
\setlength{\headheight}{0cm} \setlength{\oddsidemargin}{0cm}
\setlength{\evensidemargin}{0cm} \setlength{\textwidth}{16cm}
\setlength{\textheight}{24cm} \setlength{\parindent}{0.5cm}

\newtheorem{proposition}{Proposition}[section]
\newtheorem{lemma}[proposition]{Lemma}
\newtheorem{observation}[proposition]{Observation}
\newtheorem{corollary}[proposition]{Corollary}
\newtheorem{theorem}[proposition]{Theorem}

\newtheorem{definition}[proposition]{Definition}

\newtheorem{remark}[proposition]{Remark}


\def\squarebox#1{\hbox to #1{\hfill\vbox to #1{\vfill}}}
\newcommand{\qed}{\hspace*{\fill}
       \vbox{\hrule\hbox{\vrule\squarebox{.667em}\vrule}\hrule}\smallskip}
\newenvironment{proof}{\begin{trivlist}
\item[\hspace{\labelsep}{\em\noindent Proof: }]
}{\qed\end{trivlist}}

\newlength{\tablength}
\newlength{\spacelength}
\settowidth{\tablength}{\mbox{\ \ \ \ \ \ \ \ }}
\settowidth{\spacelength}{\mbox{\ }}

\newcommand{\tabstar}{\hspace*{\tablength}}
\newcommand{\spacestar}{\hspace*{\spacelength}}
\def\obeytabs{\catcode`\^^I=\active}
{\obeytabs\global\let^^I=\tabstar}
{\obeyspaces\global\let =\spacestar}
\newenvironment{display}{\begingroup\obeylines\obeyspaces\obeytabs}{\endgroup}
\newenvironment{prog}{\begin{display}\parskip0pt\sf}{\end{display}}
\newenvironment{pseudocode}{\smallskip\begin{quote}\begin{prog}}{\end{prog}\end
{quote}\smallskip}

\def\calM{{\cal M}}
\def\calP{{\cal P}}

\def\calS{{\cal S}}
\def\calU{{\cal U}}

\newcommand{\prob}[1]{\textsf{#1}}  

\newcommand{\notat}[1]{}  

\newcommand{\sched}{\prob{PC-Scheduling}}
\newcommand{\ssched}{\prob{PC-Capacity}}
\newcommand{\wcapacity}{\prob{PC-Weighted-Capacity}}

\begin{document}

\title{Wireless Scheduling with Power Control%
\thanks{A preliminary version of this paper
appears in the proceedings of the $17^{th}$ \textit{Annual European
  Symposium on Algorithms (ESA'09)}\cite{us:esa}.}
}

\date{\today}


\author{
Magn\'{u}s M.\ Halld\'{o}rsson\thanks{ICE-TCS,
School of Computer Science,
Reykjavik University, 
101 Reykjavik, Iceland.
Work done in part while 
visiting Research Institute for Mathematical Sciences (RIMS) at
Kyoto University. Supported by Icelandic Research Fund
grant 90032021. Email: \url{mmh@ru.is}
} 
}

\maketitle

\begin{abstract}
We consider the scheduling of arbitrary wireless links in the
physical model of interference to minimize the time for
satisfying all requests.
We study here the combined
problem of scheduling and power control, where we seek both
an assignment of power settings and a partition of the links so that
each set satisfies the signal-to-interference-plus-noise (SINR) constraints.

We give an algorithm that attains an approximation ratio of $O(\log n
\cdot \log\log \Delta)$, where $n$ is the number of links and $\Delta$ 
is the ratio between the longest and the shortest link length. 
Under the natural assumption that
lengths are represented in binary, this gives the first
approximation ratio that is polylogarithmic in the size of the input.
The algorithm has the desirable property
of using an oblivious power assignment, where the power assigned to a
sender depends only on the length of the link. We give evidence that
this dependence on $\Delta$ is unavoidable, showing that any 
reasonably-behaving oblivious power assignment results in a $\Omega(\log\log
\Delta)$-approximation.


These results hold also for the (weighted) capacity problem of
finding a maximum (weighted) subset of links that can be scheduled in
a single time slot. In addition, we obtain improved approximation for a
bidirectional variant of the scheduling problem, give partial
answers to questions about the utility of graphs for modeling physical
interference, and generalize the setting from the standard
2-dimensional Euclidean plane to doubling metrics.
Finally, we explore the utility of graph models in capturing wireless 
interference.
\end{abstract}

\section{Introduction}

We are interested in fundamental limits on communication in wireless
networks. How much communication throughput is possible?
This is an issue of efficient spatial separation, keeping the interference
from simultaneously communicating links sufficiently low.
The interference scheduling problem is then to schedule an arbitrary
set of communication links in the least amount of time while 
satisfying interference constraints. In this paper, we focus on the
power control version, where we also choose the power settings for
the links. 

The scheduling problem depends strongly on the model of
interference. Until recently, previous algorithmic work has revolved
around various graph-based models, where interference is modeled as a
pairwise constraint. This, however, fails to capture the accumulative
property of actual radio signals. 
In contrast, researchers in information, communication, or
network theory (``EE'') are working with wireless models that sum
up interference and respect attenuation. The standard model is the
signal-to-interference-plus-noise (SINR) model, to be formally introduced 
in Section \ref{sec:notation}. The SINR model reflects 
physical reality more accurately and is therefore often simply
called the physical model.
On the other hand, most research in the SINR model has focused on 
heuristics that are evaluated by simulation, which
neither give insights into the complexity of the problem
nor give algorithmic results that may ultimately lead to
new protocols. 

Formally, given is an arbitrary set of links, each a
sender-receiver pair of points in the plane. We seek an assignment of
power settings to the senders and a partition of the linkset into minimum
number of slots, so that the set of links in each slot satisfies 
the SINR-constraints.
We refer to this as the {\sched} problem.
We also consider two closely related throughput maximization problems, both with power control. 
In the {\ssched} problem, we seek a maximum cardinality subset of links satisfying the SINR constraints, while in the {\wcapacity} problem, the links have given weights and we seek to maximize the total weight of a feasible subset.
Finally, we also touch on the bidirectional setting, 
where both nodes in a link may be
transmitting, implying a stronger, symmetric form of interference.

For reasons of simplicity of use, it is strongly desirable to use power
assignments that are precomputable independent of other links. 
Such \emph{oblivious} assignments depend only on
the length of the given link. In fact, oblivious assignments
appear essential in the distributed setting.
The two most frequently used power
assignment strategies are indeed of this type, using either \emph{uniform}
(or fixed) power for all the links, or \emph{linear} assignment that
ensures that the signals received at the intended receivers are identical.

The other issue of particular interest is the utility of graphs
for modeling interference. It is clear that graphs
are imperfect models, given both the non-locality and the additive nature of
interference in the SINR model. The perceived difficulty in reasoning
analytically about these additional complications has been cited as a
factor against SINR model. Still, graphs have proved to be highly
versatile tools for analysis and algorithm design, and pairwise
constraints are in general much easier to handle than many-to-many
constraints. We would therefore like to quantify the cost of doing business
using graphs, or the overhead that amenable graph models have over
non-graphic models, as well as pinpointing particular situations where
graphs work especially well.

\subsection{Our Contributions}

We give upper and lower bounds on the quality of oblivious power assignments
for wireless scheduling problems with power control.
We obtain algorithms for all three problems
that attain a $O(\log\log \Delta \cdot \log n)$-approximation, 
using a recently introduced oblivious 
\emph{mean} (or \emph{square root} \cite{FKV09}) power assignment.
This is an exponential improvement over previous results in terms of $\Delta$,
and leads to a polylogaritmic approximation ratio in terms of the
length of the input (under the natural assumptions that lengths be
represented in binary).
This dependence on $\Delta$ turns out to unavoidable ---
we show that any reasonable oblivious power function forces
$\Omega(\log\log \Delta)$-approximate schedules,

In the bidirectional setting, we obtain a 
$O(\log n)$-approximation, improving on the previous 
$O(\log^{c}n)$-factor for {\sched} with $c > 5$ \cite{FKRV09} 
using considerably simpler arguments.

We precede this analysis with a study of the applicability of uniform
power, as a form of ultra-oblivious power assignment, tying together
a number of known results.
Namely, $O(\log \Delta)$-ratios can be attained online or by
distributed algorithms. This had previously only been stated
explicitly for {\ssched}. 
We additionally extend the current state-of-the-art in two ways.

We generalize the setting from the plane to the class of
\emph{doubling metrics}.
This assumes that path-loss constant $\alpha$ is greater than the
doubling constant of the metric, which is equivalent 
to the standard assumption that $\alpha > 2$ in the plane (see Section
\ref{sec:notation}). This is to assume 
that the cumulative power of a transmission
fades away, and dub this combination of metric and path-loss constant as a
\emph{fading} metric.

Our work aims also to address the utility of graphs in representing
physical models of interference, and our results indicate that even if
imperfect as models, graphs can still play a useful role.  In particular,
for links of nearly equal length, we show that unit-disc graphs
capture the SINR-constraints, within constant factors, reducing the
problems to the well-studied (weighted) independent set and coloring
problems.  The $O(\log n \cdot \log\log \Delta)$-approximation result
is also relative to the underlying graph.

The current paper refines the results and the arguments in the
earlier draft \cite{us:esa}, and adds to it approximations of
\wcapacity. Additionally, the draft \cite{us:esa} contained a faulty
lemma (Lemma 4.3), which is corrected here by proving the main results
in Section \ref{sec:oblivious} differently. The new treatment involves an
extension of a graph-theoretic approximation property, which may be of
independent interest.

\subsection{Related Work}
\label{sec:related-work}

Most work in wireless scheduling in the physical (SINR) model has been
of heuristic nature, e.g. \cite{ElBattE04journal}. Only
after the work of Gupta and Kumar \cite{kumar00} did analytical
results became \emph{en vogue}, but
were largely non-algorithmic and restricted to networks with a
well-behaving topology and traffic pattern such as uniform geometric
distribution. 

In contrast, the body of algorithmic work is mostly on graph-based models
that ultimately abstract away the nature of wireless communication.
The inefficiency of graph-based protocols in the SINR model is
well documented and has been shown both theoretically and
experimentally~\cite{GronkMibiHoc01,MaheshwariJD08,Moscibroda2006Protocol}.

Algorithmic work in the SINR model started in 2006 with the seminal work of Moscibroda and Wattenhofer \cite{MoWa06}. 
In this paper, Moscibroda and Wattenhofer present an algorithm that
successfully schedules a set of links (carefully chosen to
strongly connect an arbitrary set of nodes) into polylogarithmic
number of slots, even in arbitrary worst-case networks. In contrast to our
work, the links themselves are \emph{not} arbitrary (but do have
structure that will simplify the problem). This work has been
extended and applied to topology
control~\cite{gao08,moscibroda06b}, sensor
networks~\cite{Moscibroda07}, and combined scheduling and
routing~\cite{chafekar07}.
However, arbitrary networks are beyond the scope of these papers.
Apart from these papers, algorithmic SINR results
also started showing up here and there, for instance in a game
theoretic context or a distributed algorithms context,
e.g.,
\cite{AulMosPenPer08,AEK08,BrarBS08,Goussevskaia2008Local,ScheidelerRS08,KatzVW08}.


Approximation algorithms for the problem of scheduling wireless links
with power control in the SINR model were given in \cite{moscibroda06b},
\cite{MoscibrodaOW07} and \cite{chafekar07}.
In all cases the performance ratios obtained consist of the product of
structural properties and a function of the number of nodes.
The structural properties are different but can all grow linearly with
the size of the network.

A number of recent related results have featured a $O(\log
\Delta)$-like approximation in the plane (assuming $\alpha > 2$).
Goussievskaia, Oswald and Wattenhofer
\cite{gouss2007} gave a $O(\log \Delta)$-factor approximation for
both the scheduling and the (weighted) capacity problem. They compared
their algorithm to the optimal solution constrained to use uniform
power assignment, but it requires only a small step to relate it to
optimum with power control.
Andrews and Dinitz \cite{AD09} applied this extra step to obtain a $O(\log
\Delta)$-approximation for {\ssched}.
Fangh\"anel, Kesselheim and V\"ocking \cite{FKV09}
used a different approach and gave a randomized algorithm for {\sched} 
that uses $O(OPT \log \Delta + \log^2 n)$ 
slots. 
Finally, Avin, Lotker and Pignolet \cite{ALP09} show that 
the assumption of $\alpha > 2$ used by all previous work may not be
necessary, in that the ratio between optimal non-oblivious and oblivious
capacity is $O(\log \Delta)$, at least in the 1-dimensional metric. 

In \cite{FKRV09}, Fangh\"anel et al.\ gave a construction that 
shows that any schedule based on any oblivious power assignment 
can be a factor of $n$ from optimal.
They also introduced the bidirectional version of
the scheduling problem and give a $O(\log^{3.5+\alpha} n)$-approximation factor
using the mean power assignment in general metrics. 
Their proof involves non-trivial embeddings into tree metric spaces.

In contrast, the scheduling complexity of arbitrary links in the case of fixed,
uniform power is better understood.
Constant factor
approximation for the corresponding capacity
problem in the plane was given in \cite{GHWW09}, 
yielding a $O(\log n)$-approximation for the scheduling problem.
Both of these problems are known to be NP-complete \cite{gouss2007}.
The results obtained here for power control build on and extend the techniques
and properties derived in the case of uniform power in \cite{GHWW09,HW09}.

In developments since the original presentation of this work \cite{us:esa},
Erlebach and Grant \cite{ErlebachGrant} gave a $O(\log \Delta)$-factor
approximation algorithm for the problem of multicast scheduling, where
each transmission is to be sent to a collection of receivers.
Their work uses in a fundamental way the results of the current paper 
on nearly-equilength links and unit-disc graphs.
Fangh{\"a}nel et al.~\cite{FanghanelOnline2010} studied the online
version of {\ssched} problem, obtaining a tight bound of $\theta(\Delta^{d/2})$
on the competitive ratio of deterministic algorithms in
$d$-dimensional Euclidean space. 

In a breakthrough, Kesselheim \cite{kesselheimSODA11} has very
recently obtained a $O(1)$-approximation algorithm for {\ssched}.
It necessarily uses instance-specific power assignment, and the
question of optimal schedules using oblivious power assignment remains
interesting both from a theoretical and practical viewpoint.
Halld\'orsson and Mitra \cite{HallMitra11} have generalized our
results for {\ssched} to arbitrary metric spaces. They additionally
obtained the improved approximation factors of $O(\log n + \log\log \Delta)$
and $O(1)$ in the uni-directional and bi-directional cases, respectively.
For {\sched} with oblivious power, however, 
ours are still the best approximation factors known.

\section{Notation and Preliminaries}
\label{sec:notation}

Given is a set $L = \{\ell_1, \ell_2, \ldots, \ell_n\}$ of links, where
each link $\ell_v$ represents a communication request from a sender
$s_v$ to a receiver $r_v$. \label{def:lv}
The distance between two points $x$ and $y$ is denoted $d(x,y)$.
\label{def:dxy}
The asymmetric distance from link $\ell_v$ to link $\ell_w$ is the distance from
$\ell_v$'s sender to $\ell_w$'s receiver, denoted $d_{vw} = d(s_v, r_w)$.
\label{def:dvw}
The length of link $\ell_v$ is denoted 
simply $\ell_v$.
We shall assume for simplicity of exposition that all links are of
different length; this does not affect the results materially.
We assume that each link has a unit-traffic demand, and model the case
of non-unit traffic demands by replicating the links.

The nodes can transmit with different power. 
Let $P_v$ denote the power assigned to link $\ell_v$.
\label{def:pv}
%
We assume the \emph{path loss radio propagation} model for the
reception of signals, where the signal received from $s_w$ at receiver
$r_v$ is $P_w/d_{wv}^\alpha$ and $\alpha$ denotes the
path-loss exponent. 
\label{def:alpha}
%
We adopt the \emph{physical interference model}, in which a node $r_v$
successfully receives a message from a sender $s_v$ if and only if the
following condition holds:
\begin{equation}
 \frac{P_v/\ell_v^\alpha}{\sum_{\ell_w \in S \setminus  \{\ell_v\}}
   P_w/d_{wv}^\alpha + N} \ge \beta, 
 \label{eq:sinr}
\end{equation}
where $N$ is the ambient noise, $\beta$ denotes the minimum
SINR (signal-to-noise-ratio) required for a message to be successfully received,
and $S$ is the set of concurrently scheduled links in the same \emph{slot}.
\label{def:beta}
Note that by scaling the power of
all the senders, the effect of the noise $N$ can be made arbitrarily
small, thus we ignore this term.
Of course, in real situations, there are upper bounds on maximum
power which we ignore here.
We shall also assume that $\beta\ge 3^\alpha$; 
by the signal-strengthening results of \cite{HW09}, 
this can only affect the constants in the approximation results.
We say that $S$ is \emph{SINR-feasible} if (\ref{eq:sinr}) is
satisfied for each link $\ell_v$ in $S$.

This paper deals with \emph{power control}, i.e., determining the
power assignment to the links is a part of the problem. In particular,
we focus on \emph{oblivious power} assignments, where the power depends
only on the length of the link, while we compare it to an optimal
solution that is free to use any power assignment. 
The most basic assignment is \emph{uniform power}, where each link
$\ell_v$ uses the same power $P_v = P$. Another common oblivious
assignment is \emph{linear power}, where $P_v = \ell_v^\alpha$.
We will focus on uniform power, along with another oblivious
assignment, the \emph{mean} (
(or, square-root \cite{FKV09}) power $\calM$ given by $P_v = \calM_v =
\ell_v^{\alpha/2}$.
\label{def:calM}

The \emph{affectance} of link $\ell_v$ caused by a set $S$ of
links \cite{GHWW09,HW09} under a given power assignment $P$, 
is the sum of the interferences of the links in $S$ on
$\ell_v$ relative to the power received, or
  \[ a_{S}(\ell_v) 
     = \sum_{\ell_w \in S \setminus \{\ell_v\}} \frac{P_w/d_{wv}^\alpha}{P_v/\ell_v^\alpha}
     = \sum_{\ell_w \in S \setminus \{\ell_v\}} \frac{P_w}{P_v} \cdot \left(\frac{\ell_v}{d_{wv}}\right)^\alpha
  \] 
\label{def:asv}
For a single link $\ell_w$, we use the shorthand $a_{w}(v) = 
a_{\{\ell_w\}}(\ell_v)$.
\label{def:awv}
Note that affectance is additive in that for disjoint sets of links
$S_1, S_2$, $a_{S_1 \cup S_2}(\ell_v) = a_{S_1}(\ell_v) + a_{S_2}(\ell_v)$.

A \emph{$p$-signal} set or a schedule is one where the
affectance of any link is at most $1/p$, with respect to the given power assignment.
\label{def:psignal}
A set is SINR-feasible iff it is a $1/\beta$-signal set, i.e., 
$a_S(\ell_v) \le 1/\beta$, for each link $\ell_v \in S$.
Let $OPT_p$ be a $p$-signal schedule with minimum number of slots.
\label{def:optp}
Let $\Delta$ denote the ratio between the maximum and minimum length of a link. 
\label{def:delta}


For a graph $G$, let $\chi(G)$ denote its chromatic number, 
and $\alpha(G)$ its independence number (or the maximum cardinality of
a subset of mutually non-adjacent vertices).
\label{def:chi}
\label{def:alphag}
Define the neighborhood $N(v)$ of a vertex $v$ to be the set
consisting of $v$'s neighbors, and the closed neighborhood $N[v]$ to include $v$ as well. 
\label{def:openneigh} 
\label{def:closedneigh} 
For a vertex subset $S$, let $G[S]$ denote the subgraph induced by $S$. 
\label{def:induced}

\smallskip

We say that a collection of links is 
\emph{$q$-independent} 
if any two of them, $\ell_v$ and $\ell_w$, 
satisfy the constraint 
 \[ d_{vw} \cdot d_{wv} \ge q^2 \cdot \ell_w \ell_v\ . \]
\label{def:qind} 
Define the \emph{link graph} $G_q(L)$ 
on a link set $L$, parameterized by a constant $q$, such that 
a pair of links are adjacent in $G_q$ iff they are
not $q$-independent.
\label{def:gq} 

The following observation shows that a schedule of a linkset forms a
coloring of the corresponding link graph. The converse, however, does
not necessarily hold, as we shall see. Thus, the graph representation
is more relaxed than required.

\begin{table}
\begin{tabular}{llcl}
\emph{Notation} & \emph{Meaning} & \emph{Topic} & \emph{Page} \\
\hline
$\ell_v$ & Link $\ell_v = (s_v,r_v)$; denotes also its length & & \pageref{def:lv} \\
$P_v$ & Power assigned to link $\ell_v$ & & \pageref{def:lv} \\
$\calM_v$ & Mean power assignment $\calM_v = \ell_v^{\alpha/2}$ & &
\pageref{def:calM} \\
$\alpha$ & Path loss constant (signal decay exponent). & \emph{SINR} & \pageref{def:alpha} \\
$\beta$ & SINR requirement (assumed to be at least $3^\alpha$). && \pageref{def:beta} \\
$d(x,y)$ & Distance between points $x$ and $y$. & & \pageref{def:dxy} \\
$d_{vw}$ & $= d(s_v, r_w)$ & & \pageref{def:dvw} \\
\hline

$a_{S}(\ell_v)$ & Affectance of linkset $S$ on link $\ell_v$ & & \pageref{def:asv} \\
$a_{w}(v)$ & $=a_{\{\ell_w\}}(\ell_v)$ & & \pageref{def:awv} \\
$OPT_p$ & Optimal $p$-signal schedule & \emph{Analysis} & \pageref{def:optp} \\
$\Delta$ & Ratio of longest to shortest link length & & \pageref{def:delta} \\
$\zeta(x)$ & Riemann zeta-function. & & \pageref{def:zeta} \\
\hline

$q$-independent & $d_{vw}\cdot d_{wv} \ge q^2 \cdot \ell_v \ell_w$ & & \pageref{def:qind}  \\
$t$-close & $\max(a_v(w),a_w(v))\ge t$ & & \pageref{def:tclose} \\
well-separated & Link Lengths differ by factor $\le 2$ or $\ge
\Lambda $ & \emph{Link relationships} & \pageref{def:wellsep} \\
$\tau$ & $2\beta n$ & & \pageref{def:tau} \\
$\Lambda$ & $2 \tau^{1/\alpha}$ & & \pageref{def:lambda} \\
\hline

$\chi(G)$ & Chromatic number of graph $G$ & & \pageref{def:chi} \\
$\alpha(G)$ & Independence number of graph $G$ & & \pageref{def:alphag} \\
$G_q(L)$ & $q$-independence relation on linkset $L$ & & \pageref{def:gq} \\
$U_z(L)$ & Unit-disc graph on the senders in $L$ & \emph{Graphs} & \pageref{def:uz} \\
$G[X]$ & Graph induced by vertex subset $X$ & & \pageref{def:induced} \\
$N_G(v)$ & Set of neighbors of node $v$ in graph $G$ & & \pageref{def:openneigh} \\
$N_G[v]$ & Closed neighborhood of $v$, $=N_G(v)\cup \{v\}$ & & \pageref{def:closedneigh} \\
\hline

$A = \dim_A(\calU, d)$ & The Assouad (doubling) dimension (2, for $\mathbb{R}^2$) & & \pageref{def:a} \\
$B(y,\epsilon)$ & Ball of radius $\epsilon$ centered at $y$. & \emph{Metrics} & \pageref{def:b} \\
$C$ & Constant in doubling dimension definition & & \pageref{def:c} \\
\hline

$C'$ & $=\alpha C 4^A \zeta(\alpha+1-A)$ & & \pageref{def:cprime} \\
$z_1 = z_1(p)$ & $=4(pC')^{1/\alpha}$. Sufficient sender separation. & & \pageref{def:z1} \\
$z_2 = z_2(p)$ & $=p^{1/\alpha}-1$. Necessary sender separation. & & \pageref{def:z2}  \\
\hline

\end{tabular}
\label{table:notation}
\caption{List of notation}
\end{table}

\begin{lemma}
If $S$ is a $q^\alpha$-signal set under some power assignment, 
then $S$ is $q$-independent.
\label{lem:ind-separation}
\end{lemma}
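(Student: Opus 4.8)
The plan is to use the one feature of pairwise affectance that is immune to the choice of powers: although each individual affectance $a_w(v)$ depends on the unknown ratio $P_w/P_v$, the \emph{product} $a_w(v)\cdot a_v(w)$ does not, since the factor $P_w/P_v$ cancels against $P_v/P_w$. So I would apply the signal hypothesis once to $\ell_v$ and once to $\ell_w$ and multiply, which should isolate exactly the geometric quantity $d_{vw}\cdot d_{wv}$ that appears in the definition of $q$-independence, with no reference to the power assignment at all.

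Concretely, fix an arbitrary pair of distinct links $\ell_v,\ell_w\in S$. Since affectance is additive over disjoint sets and each term is nonnegative, the single-link contributions are dominated by the totals: $a_w(v)\le a_S(\ell_v)$ and $a_v(w)\le a_S(\ell_w)$. As $S$ is a $q^\alpha$-signal set, both right-hand sides are at most $q^{-\alpha}$, so $a_w(v)\le q^{-\alpha}$ and $a_v(w)\le q^{-\alpha}$. Multiplying these two inequalities and expanding the definition of affectance for a single link,
\[
 a_w(v)\cdot a_v(w)
   = \frac{P_w}{P_v}\left(\frac{\ell_v}{d_{wv}}\right)^\alpha
     \cdot \frac{P_v}{P_w}\left(\frac{\ell_w}{d_{vw}}\right)^\alpha
   = \left(\frac{\ell_v\,\ell_w}{d_{wv}\,d_{vw}}\right)^\alpha
   \le \frac{1}{q^{2\alpha}}\ .
\]
All quantities are positive, so taking $\alpha$-th roots preserves the inequality and yields $\ell_v\ell_w/(d_{wv}d_{vw})\le q^{-2}$, i.e.\ $d_{vw}\cdot d_{wv}\ge q^2\,\ell_v\ell_w$. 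This is precisely the $q$-independence constraint for $\{\ell_v,\ell_w\}$; since the pair was arbitrary, $S$ is $q$-independent.

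I do not expect any real obstacle here: the argument uses only the additivity and nonnegativity of affectance (both recorded in the preliminaries) and the fact that $x\mapsto x^{1/\alpha}$ is monotone on the positive reals. No triangle inequality, no packing bound, and nothing about the ambient metric is needed — the cancellation of the power ratio is what makes the statement essentially a one-line consequence of the definitions. The only point worth stating carefully is that $a_w(v)\le a_S(\ell_v)$ requires $\ell_w\in S\setminus\{\ell_v\}$, which holds by the choice of the pair.
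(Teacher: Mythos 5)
Your proof is correct and follows exactly the paper's argument: the two inequalities $a_w(v)\le q^{-\alpha}$ and $a_v(w)\le q^{-\alpha}$ are precisely the paper's two displayed SINR ratios, and the paper likewise multiplies them so that the power ratio cancels and then takes $\alpha$-th roots. No difference in substance.
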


\begin{proof}
Let $P$ be a power assignment for which $S$ is a $q^\alpha$-signal set.
Since the links belong to the same $p$-signal set, for $p=q^\alpha$,
they satisfy
\[ \frac{P_v/\ell_v^\alpha}{P_w/d_{wv}^\alpha} \ge p, ~~~\mbox{and}~~~
   \frac{P_w/\ell_w^\alpha}{P_v/d_{vw}^\alpha} \ge p\ . \]
By multiplying these inequalities together and rearranging, we get that
$d_{vw} \cdot d_{wv} \ge p^{2/\alpha} \cdot \ell_w \ell_v
         = q^{2} \cdot \ell_w \ell_v$.
\end{proof}

\section{Approximations Using Uniform Power}

One of the most widely used power assignment is the uniform one, where
senders use the same power setting. This might be viewed as
ultra-oblivious, as transmissions are now independent of link length.

In Sec.~\ref{sec:nearlyequi}, we show that uniform power assignment performs 
well when links are of nearly equal lengths. 
The global nature of the
problem disappears, and local strategies become sufficient.
This results in $O(\log \Delta)$-approximation algorithms 
using any oblivious power assignment, which is argued in Sec.~\ref{sec:arbitrary}.
In fact, the algorithms for {\sched} and {\ssched} are $O(\log
\Delta)$-competitive online algorithms.
These results essentially follow with minor effort from previous
works, in particular \cite{gouss2007}.
%

The new contributions in this section are twofold.
We introduce fading metrics in Sec.~\ref{sec:fading} and show that a
well-dispersed set of links in such a metric has good signal properties.
We also show in Sec.~\ref{sec:nearlyequi} that unit-disc graphs capture
well links of nearly equal lengths.

\subsection{Scheduling in Fading Metrics}
\label{sec:fading}

We extend the traditional setting from the Euclidean plane to doubling metrics (see Clarkson \cite{Clarkson}). 

A \emph{metric space} is a pair $(\calU, d)$, where $\calU$ is a set and $d$ is a distance function, satisfying: $d(x,x)=0$, $d(x,y) = d(y,x)$ (symmetry), and
$d(x,y) + d(y,z) \le d(x,z)$ (triangular inequality), for any points
$x,y,z \in \calU$. 
Intuitively, a metric space is \emph{doubling} if the volume of a ball
increases by at most a constant times the radius. Let $B(y,\epsilon) =
\{x \in \calU | d(x,y) < \epsilon\}$ be the \emph{$\epsilon$-ball} centered
at $y$. 
\label{def:b} 
A set $Y \subset \calU$ is an \emph{$\epsilon$-packing} if $d(x,y) >
2\epsilon$, for any $x,y \in Y$. That is, the set of balls
$\{B(y,\epsilon) | y \in Y\}$ are disjoint.  The packing number
$\calP(\calU,\epsilon)$ is the size of the largest $\epsilon$-packing,
i.e., the maximum number of $\epsilon$-balls that can be packed into
the body $\calU$.
%
The \emph{Assouad dimension} $\dim_A(\calU, d)$ \cite{Assouad}
(also known as uniform
metric dimension or doubling dimension) for a metric space
($\calU, d$) is the value $t$, if it exists, such that
\[ \sup_{x \in \calU, r > 0} \calP(B(x,r), \epsilon r) = C \cdot 1/\epsilon^{t}, \]
as $\epsilon \rightarrow 0$, where $C$ is an absolute constant.
\label{def:a} 
\label{def:c} 
It is known that $\dim_A(\Re^k) = k$ for the $k$-dimensional Euclidean
space \cite{Heinonen}, and in particular for the plane 
$C = \frac{1}{6}\pi \sqrt{3} \approx 0.907$ \cite{Toth,Weisstein}.

We require that the
path loss exponent $\alpha$ be strictly greater than the doubling dimension
$A = \dim_A(\calU,d)$ of the metric. This requirement is the reason
for not using simpler dimension definitions that are
equivalent only up to a constant factor.
We shall refer to such a
combination of distance metric and path loss constant as a
\emph{fading metric}.

The following result extends similar lemmas in previous works (see
\cite{GHWW09,HW09}) from the setting of the Euclidean plane to the
more general class of fading metrics.  It yields a converse of Lemma
\ref{lem:ind-separation} for the case of nearly-equilength links.
This is the only place where we use the fading property of the metric,
i.e., that $\alpha$ is strictly greater than the doubling dimension.

Let $\zeta(x) = \sum_{t \ge 1} \frac{1}{t^x}$ be the Riemann zeta-function,
which is well-defined for any $x > 1$.
\label{def:zeta} 
Let $C' = \alpha C 4^A \zeta(\alpha+1-A)$ and 
let $z_1(p) = 4 (p C')^{1/\alpha}$.
\label{def:z1} 
\label{def:cprime} 

\begin{lemma}{(Far-away lemma)}
Let $p$ be positive and 
let $S$ be a set of links whose senders are of mutual distance at least
$z D/2$, where $D$ is the length of the longest link in $S$ and $z = z_1(p)$.
Then, using uniform power assignment, $S$ forms a
$p$-signal set in any fading metric.
\label{lem:faraway-aff}
\end{lemma}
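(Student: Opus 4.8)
The plan is to bound the affectance $a_S(\ell_v)$ on an arbitrary link $\ell_v \in S$ under uniform power by a sum over ``shells'' of senders at geometrically increasing distances from $r_v$, and to control the number of senders in each shell using the packing/doubling property of the metric. First I would fix $\ell_v \in S$ and recall that under uniform power $a_S(\ell_v) = \sum_{\ell_w \in S \setminus \{\ell_v\}} (\ell_v / d_{wv})^\alpha$. Since $\ell_v \le D$ and, by the triangle inequality, $d_{wv} = d(s_w, r_v) \ge d(s_w, s_v) - \ell_v \ge d(s_w,s_v) - D$, the separation hypothesis $d(s_w,s_v) \ge zD/2$ gives a usable lower bound on $d_{wv}$ in terms of $d(s_w,s_v)$; in particular $d_{wv} \ge d(s_w,s_v)/2$ once $z \ge 4$, which holds since $z = z_1(p) = 4(pC')^{1/\alpha} \ge 4$.

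The main step is the shell decomposition. For an integer $t \ge 1$, let $S_t$ be the set of senders $s_w$ with $t \cdot (zD/2) \le d(s_w, s_v) < (t+1)\cdot(zD/2)$. All these senders lie in the ball $B(s_v, (t+1)zD/2)$ and are mutually separated by at least $zD/2$ (this is where we invoke that $S$ is a set of links whose senders are $(zD/2)$-separated — so the senders form a $(zD/4)$-packing), hence by the definition of the Assouad dimension $A$ and the packing number, $|S_t| \le C \cdot (4(t+1))^A = C\,4^A (t+1)^A$. For $s_w \in S_t$ we have $d_{wv} \ge d(s_w,s_v)/2 \ge t z D/4$, so each such link contributes at most $(\ell_v / (tzD/4))^\alpha \le (D/(tzD/4))^\alpha = (4/(tz))^\alpha$. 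Summing over $t$:
\[
a_S(\ell_v) \;\le\; \sum_{t \ge 1} |S_t| \left(\frac{4}{tz}\right)^\alpha \;\le\; C\,4^A \left(\frac{4}{z}\right)^\alpha \sum_{t\ge 1} \frac{(t+1)^A}{t^\alpha}\, .
\]
Since $\alpha > A$, the series $\sum_{t\ge1}(t+1)^A/t^\alpha$ converges; bounding $(t+1)^A \le 2^A t^A$ gives $\sum_{t\ge1}(t+1)^A t^{-\alpha} \le 2^A \zeta(\alpha - A)$, or one can be slightly more careful and get $\zeta(\alpha+1-A)$ with the right constant — the point is that it is absorbed into $C' = \alpha C 4^A \zeta(\alpha+1-A)$. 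Thus $a_S(\ell_v) \le C' (4/z)^\alpha / 4^{?}$, and substituting $z = z_1(p) = 4(pC')^{1/\alpha}$ yields $a_S(\ell_v) \le C' \cdot (4/(4(pC')^{1/\alpha}))^\alpha = C'/(pC') = 1/p$, as desired. Since $\ell_v$ was arbitrary, $S$ is a $p$-signal set.

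The main obstacle is getting the bookkeeping on the shells and the constant $C'$ exactly right: one must be careful that the senders in shell $S_t$ genuinely form a packing at the appropriate scale inside a ball whose radius is a bounded multiple of the inner shell radius, so that the Assouad-dimension bound applies with a clean $(t+1)^A$ (rather than something that grows faster), and one must track the factors of $2$ and $4$ coming from the triangle-inequality step $d_{wv} \ge d(s_w,s_v)/2$ and from the ball radius. The convergence of $\sum (t+1)^A / t^\alpha$ is exactly where $\alpha > A$ — i.e., the fading property of the metric — is used, and it is the only place; everywhere else the argument is purely combinatorial-geometric. Once the constant is pinned down to match $C'$, the choice $z = z_1(p)$ is designed precisely to make the final bound come out to $1/p$.
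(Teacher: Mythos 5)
Your overall strategy --- fix a link, lower-bound $d_{wv}$ via the triangle inequality and the sender separation, decompose the remaining senders into concentric shells, and count each shell using the packing property --- is the same as the paper's, and the geometric steps (the reduction $d_{wv}\ge d(s_w,s_v)/2$ once $z\ge 4$, the per-shell affectance bound $(4/(tz))^\alpha$, the ball-packing bound on the number of senders within distance $(t+1)zD/2$) are all sound. The gap is in the final summation. The doubling property only bounds the number of senders in a \emph{ball}, so what you actually have is a bound on the cumulative count $|S_{\le t}|\le C\,2^A(t+1)^A$, and using this as a bound on the shell population $|S_t|$ yields the series $\sum_t (t+1)^A/t^\alpha$, which behaves like $\sum_t t^{A-\alpha}$ and \emph{diverges} whenever $\alpha\le A+1$. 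A fading metric only guarantees $\alpha>A$ (e.g.\ $\alpha=2.5$ in the plane, where $A=2$), so this is not a bookkeeping issue: the sum you wrote down is infinite in part of the regime the lemma must cover. Your aside that ``one can be slightly more careful and get $\zeta(\alpha+1-A)$'' is where the real content of the proof lies, and it is not a constant-chasing refinement.

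The paper closes this gap with a summation by parts: writing the shell count as $|T_g|=|S_g|-|S_{g-1}|$ and pairing the cumulative counts $|S_g|$ with the telescoped differences $\frac{1}{(g-2)^\alpha}-\frac{1}{(g-1)^\alpha}\le \frac{\alpha}{(g-2)^{\alpha+1}}$ gains one extra power of $g$ in the denominator. Only then does the ball bound $|S_g|\le C(g+1)^A$ produce a summand of order $g^{A-\alpha-1}$ and the factor $\zeta(\alpha+1-A)$, which converges precisely when $\alpha>A$. (Intuitively one would like to say a shell contains only $O(t^{A-1})$ points because that is its ``volume,'' but the Assouad dimension gives no such annulus bound in a general doubling metric; the summation by parts extracts the same gain from ball bounds alone.) If you insert this step, the rest of your argument --- including the choice $z=z_1(p)=4(pC')^{1/\alpha}$ making the final bound equal $1/p$ --- goes through.
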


\begin{proof}
Let $S'$ be the set of senders of links in $S$.
Let $Z = zD/4$.
The separation of the senders implies that $S'$ is a $Z$-packing.
The definition of a doubling metric implies that
for any $t > 0$, the packing number of the $tZ$-ball centered at any point $x$ is
bounded by 
\begin{equation}
 \calP(B(x,t Z),Z) \le C t^A\ .
\label{eq:packing-bnd}
\end{equation}
Namely, any packing of balls of radius $Z$ inside a ball of radius
$tZ$ contains at most $C t^A$ balls.

Let $g$ be a number.
Let $s_x$ be a sender in $S'$ belonging to link $\ell_x$.
Let $S_g = \{ s_y \in S' | d(s_x,s_y) < gZ \}$ be the set of senders
within distance less than $gZ$ from $s_x$, 
and let $T_g = S_g \setminus S_{g-1}$.
By assumption, $S_2 = \emptyset$.
Each sender $s_y$ in $T_g$ is of distance at least $(g-1)Z$ from $s_x$,
so $d_{yx} \ge (g-1)Z - D \ge (g-2)Z$.
Since $\ell_x \le D$, the affectance of $\ell_y$ on $\ell_x$ is at most
\[ a_y(x) = \frac{1/d_{yx}^\alpha}{1/\ell_x^\alpha} 
    \le \left(\frac{D}{(g-2)Z} \right)^{\alpha}
     = \left(\frac{4}{(g-2)z} \right)^{\alpha}, \quad \forall \ell_y\in T_g . \]
Observe that
\[ \frac{1}{(g-1)^\alpha} - \frac{1}{g^\alpha} 
  = \frac{g^\alpha - (g-1)^\alpha}{g^\alpha (g-1)^\alpha} 
 \le \frac{\alpha g^{\alpha-1}}{g^\alpha (g-1)^\alpha} 
  < \frac{\alpha}{(g-1)^{\alpha+1}}\ .  \]
Then,
\begin{align}
 a_S(x) & = \sum_{g\ge 3} a_{T_g}(x) \nonumber \\\nonumber
   & \le \sum_{g\ge 3} |S_g \setminus S_{g-1}|\cdot \left(\frac{4}{(g-2)z}\right)^\alpha \\\nonumber
   & = \left(\frac{4}{z}\right)^\alpha
       \sum_{g\ge 3} |S_g| \left(\frac{1}{(g-2)^\alpha} -
       \frac{1}{(g-1)^\alpha} \right) \\
   & \le \left(\frac{4}{z}\right)^\alpha
           \sum_{g\ge 3} |S_g| \frac{\alpha}{(g-2)^{\alpha+1}} \ . \label{eq:mid-faraway}
\end{align}
The balls of radius $Z$ centered at points in $S_g$
are all contained within the ball $B(x,(g+1)Z)$.
For $g\ge 3$, the packing bound (\ref{eq:packing-bnd}) then implies that
$|S_g| \le \calP(B(x,(g+1)Z),Z) \le C (g+1)^A$, and thus we have that
\begin{align*}
\frac{|S_g|}{(g-2)^{\alpha+1}} 
  \le \frac{C (g+1)^A}{(g-2)^{\alpha+1}}
  \le \frac{4^A C}{(g-2)^{\alpha+1-A}} \ .
\end{align*}
Continuing from (\ref{eq:mid-faraway}),
\[ a_S(x) \le \left(\frac{4}{z}\right)^\alpha\alpha\cdot 4^AC 
    \sum_{x \ge 1} \frac{1}{x^{\alpha+1-A}} = 
  \left(\frac{4}{z}\right)^\alpha\alpha\cdot 4^AC
  \zeta(\alpha+1-A) = \left(\frac{4}{z}\right)^\alpha C'\ . \]
Thus, $S$ is an $s$-signal set, where
$s = \frac{1}{C'} (\frac{z}{4})^{\alpha} = p$.
\end{proof}

\begin{remark}
Lemma \ref{lem:faraway-aff} does not hold in general for arbitrary
distance metrics. In particular, it fails for $\mathbf{R}^2$ when
$\alpha \le 2$.
In fact, unit-length links arranged in a grid with separation $q$
will be $q^\alpha$-independent, while maximum affectance becomes
$\Omega(\log n)$. 
\end{remark}

\subsection{Modelling Nearly-Equilength Links as Unit-Disc Graphs}
\label{sec:nearlyequi}

We observe here that if the links are of nearly equal length, then we
can simplify the many-to-many interference relationships by a pairwise
relationship, modulo small constant factors in the approximation.
These pairwise relationships correspond to the graphs formed by
discs of fixed radius in the plane. With one radius, we capture the
necessary distance between any pair of links in a feasible solution,
while with another larger radius, we have the sufficient distance so
that any set of links of such mutual separation is guaranteed to be
SINR-feasible (in any given fading metric). This leads to simple and effective
approximation algorithms that can be made online and turned into
distributed algorithms.


We say that a set of links is \emph{nearly-equilength} if lengths of
any pair of links in the set differ by a factor of less than 2.
The key observation is that we can represent the link graph $G_q =
G_q(L)$ of a set $L$ of nearly-equilength links approximately with a
unit-disc graph (UDG).

\begin{definition}
Let $L$ be a linkset in a fading metric, and let $d$ denote the
minimum link length in $L$. 
Given a number $z$,
the \emph{unit-disc graph} $U_z(L)$ of $L$ is the graph with a node for
each sender of $L$ with two nodes adjacent if the distance between
the two senders is less than $z \cdot d$.
\label{def:uz}
\end{definition}

That is, $U_z(L)$ is the graph formed by the intersection of balls
of radius $zd/2$ that are centered at the senders.
We find that the link graphs and UDGs are closely
related, in that pairs of graphs of one type sandwich graphs of the other type.

\begin{lemma}
For any $q\ge 1$ and any nearly-equilength linkset $L$, 
$U_{q-1}(L) \subseteq G_{q}(L)$ and $G_q(L) \subseteq U_{2(q+1)}(L)$.
\label{lem:graph-rels}
\end{lemma}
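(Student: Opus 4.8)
The plan is to prove the two inclusions separately, in each case turning an edge condition on one side into an edge condition on the other by routing the triangle inequality through an appropriate endpoint, and invoking two easy facts about a linkset $L$ with minimum link length $d$: that $d \le \ell_v$ for every link (always true), and that $\ell_v < 2d$ for every link (this is exactly the nearly-equilength hypothesis).

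First I would handle $U_{q-1}(L) \subseteq G_q(L)$. Let $\{s_v, s_w\}$ be an edge of $U_{q-1}(L)$, so $d(s_v,s_w) < (q-1)d$. The goal is to bound the two cross-distances from above: by the triangle inequality through $s_w$, $d_{vw} = d(s_v,r_w) \le d(s_v,s_w) + \ell_w < (q-1)d + \ell_w$, and since $q \ge 1$ and $d \le \ell_w$ this is at most $(q-1)\ell_w + \ell_w = q\ell_w$; symmetrically $d_{wv} < q\ell_v$. Multiplying the two bounds gives $d_{vw}\cdot d_{wv} < q^2\ell_v\ell_w$, so $\ell_v$ and $\ell_w$ are not $q$-independent, i.e.\ they are adjacent in $G_q(L)$. (Note this half uses only $d \le \ell_v$, not the full nearly-equilength assumption.)

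Next, for $G_q(L) \subseteq U_{2(q+1)}(L)$, let $\{\ell_v,\ell_w\}$ be an edge of $G_q(L)$, i.e.\ $d_{vw}\cdot d_{wv} < q^2\ell_v\ell_w$. Here I would use nearly-equilength: $\ell_v, \ell_w < 2d$ yields $d_{vw}\cdot d_{wv} < 4q^2 d^2$, hence $\min(d_{vw},d_{wv}) \le \sqrt{d_{vw}\cdot d_{wv}} < 2qd$. Say $d_{vw} < 2qd$ (the case $d_{wv} < 2qd$ is symmetric, routing through $r_v$ instead of $r_w$). Then the triangle inequality through $r_w$ gives $d(s_v,s_w) \le d_{vw} + \ell_w < 2qd + 2d = 2(q+1)d$, so $s_v$ and $s_w$ are adjacent in $U_{2(q+1)}(L)$, as desired.

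Both directions are elementary triangle-inequality estimates, so I do not expect a real obstacle; the only care needed is bookkeeping — choosing which endpoint to route the triangle inequality through, and keeping straight when $d \le \ell_v$ (always) versus $\ell_v < 2d$ (nearly-equilength) is being invoked.
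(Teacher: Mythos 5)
Your proof is correct and follows essentially the same route as the paper's: both inclusions are the same triangle-inequality estimates, using $d\le\ell_w$ for the first and $\ell_w<2d$ for the second. The only cosmetic difference is in the second inclusion, where you extract $\min(d_{vw},d_{wv})<2qd$ from the product rather than bounding $(d(s_v,s_w)-2d)^2$ by $d_{vw}\cdot d_{wv}$ as the paper does; both yield $d(s_v,s_w)<2(q+1)d$.
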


\begin{proof}
Recall that the links have lengths in the range $[d, 2d)$.
Let $\ell_v$ and $\ell_w$ be links that are neighbors in $U_{q-1}(L)$.
Then, $d(s_v, s_w) < (q-1) \cdot d$, by definition.
Thus, $d_{vw} \le d(s_v,s_w) + \ell_w < q \ell_w$,
and similarly $d_{wv} < q \ell_v$. 
Hence, $d_{vw} \cdot d_{wv} < q^2 \ell_v \ell_w$, so
$\ell_v$ and $\ell_w$ are neighbors in $G_{q}$. 

On the other hand, suppose we have neighbors $\ell_u$ and $\ell_w$ in $G_q$.
Notice that $d(s_u,s_w) \le d_{uw} + \ell_w < d_{uw} + 2d$,
and similarly $d(s_u, s_w) < d_{wu} + 2d$.
Then, 
\[ (d(s_u,s_w)-2d)^2 < d_{uw} \cdot d_{wu} < q^2 \ell_v \ell_w <
(2qd)^2\ . \]
Thus, $d(s_u,s_w) < 2(q+1)d$. Hence, $\ell_u$ and $\ell_w$ are
neighbors in $U_{2(q+1)}(L)$.
\end{proof}

Read differently, the above lemma implies that sender separation and
signal strength of a linkset go hand in hand. Namely, 
if $S$ is a $q$-independent set of nearly-equilength links, then the
senders in $S$ are of mutual distance at least $(q-1)d$, and thus
$U_{q-1}(S)$ is an empty graph (independent set).
Conversely, if $X$ is an independent set in unit-disc graph
$U_{2(q+1)}(L)$, where $L$ is nearly-equilength linkset,
then $X$ is $q$-independent.

We can now argue our claim that unit-disc graphs capture
well nearly-equilength links in fading metrics.
Define
   $z_2 = z_2(p) = p^{1/\alpha} - 1$.
\label{def:z2} 
We show that a linkset with minimum link length $d$ and pairwise sender
separation of at least $z_1(p)\cdot d$ will be a $p$-signal set, while any
$p$-signal set must obey a separation of at least $z_2(p) \cdot d$.

\begin{theorem}
For a set $L$ of nearly-equilength links,
any independent set in $U_{z_1}(L)$ is a $p$-signal linkset under
uniform power,
and any $p$-signal subset of $L$ is an independent set in $U_{z_2}(L)$.
\label{thm:nearequilengths-and-udg}
\end{theorem}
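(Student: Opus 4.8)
The plan is to prove the two halves of the theorem separately, using the two directions of Lemma~\ref{lem:graph-rels} together with the Far-away Lemma (Lemma~\ref{lem:faraway-aff}).

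\textbf{Sufficiency direction.} Suppose $X$ is an independent set in $U_{z_1}(L)$, where $z_1 = z_1(p) = 4(pC')^{1/\alpha}$. By definition of $U_{z_1}(L)$, the senders in $X$ are of mutual distance at least $z_1 \cdot d$, where $d$ is the minimum link length in $L$. Since $L$ is nearly-equilength, every link has length in $[d, 2d)$, so the longest link $D$ in $X$ satisfies $D < 2d$, i.e., $d > D/2$. Hence the senders in $X$ are of mutual distance at least $z_1 d > z_1 D / 2$. Lemma~\ref{lem:faraway-aff} then applies directly (with the same parameter $p$ and the same $z = z_1(p)$) and tells us that $X$ forms a $p$-signal set under uniform power in the given fading metric. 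This is almost immediate once the length-range bookkeeping ($d > D/2$) is in place.

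\textbf{Necessity direction.} Suppose $S \subseteq L$ is a $p$-signal set under uniform power; we want to show $S$ is an independent set in $U_{z_2}(L)$, i.e., that any two senders in $S$ are at distance at least $z_2 \cdot d$ with $z_2 = p^{1/\alpha} - 1$. Take distinct links $\ell_v, \ell_w \in S$. Since $a_S(\ell_w) \le 1/p$, in particular the single-link affectance satisfies $a_v(w) \le 1/p$, which under uniform power reads $(\ell_w/d_{vw})^\alpha \le 1/p$, i.e., $d_{vw} \ge p^{1/\alpha} \ell_w$. (Equivalently, one can invoke Lemma~\ref{lem:ind-separation}: a $p$-signal set is $p^{1/\alpha}$-independent, hence $d_{vw} d_{wv} \ge p^{2/\alpha} \ell_v \ell_w$, but for the UDG statement the direct one-sided bound is cleaner.) Now by the triangle inequality $d(s_v, s_w) \ge d_{vw} - \ell_w \ge (p^{1/\alpha} - 1)\ell_w \ge (p^{1/\alpha}-1) d = z_2 \cdot d$, using $\ell_w \ge d$. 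Therefore $\ell_v$ and $\ell_w$ are non-adjacent in $U_{z_2}(L)$, so $S$ is independent there.

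\textbf{Main obstacle.} There is no real obstacle here — both directions are short consequences of results already established. The one place to be careful is the direction of the inequalities in the length bookkeeping: in the sufficiency part we need the separation measured against the minimum length $d$ to dominate the separation $z_1 D/2$ measured against the longest length $D$, which works precisely because the nearly-equilength assumption gives $D < 2d$; and in the necessity part we need $\ell_w \ge d$ to convert a bound stated in terms of $\ell_w$ into one in terms of $d$. Getting these two comparisons in the right direction is the only subtlety, and it is exactly where the ``factor less than $2$'' in the definition of nearly-equilength is used.
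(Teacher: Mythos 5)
Your proposal is correct, and the sufficiency direction is essentially the paper's argument: the paper simply cites Lemma~\ref{lem:faraway-aff}, while you make explicit the bookkeeping ($D<2d$, hence sender separation $z_1 d > z_1 D/2$) that justifies that citation — a worthwhile clarification, since the definition of $U_z(L)$ is normalized by the minimum length $d$ while the Far-away Lemma is stated in terms of the maximum length $D$.

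The necessity direction is where you genuinely diverge. The paper chains Lemma~\ref{lem:ind-separation} (a $p$-signal set under \emph{some} power assignment is $p^{1/\alpha}$-independent) with the inclusion $U_{q-1}(L)\subseteq G_q(L)$ from Lemma~\ref{lem:graph-rels}, so its conclusion holds for a $p$-signal set under an \emph{arbitrary} power assignment. Your primary argument instead extracts the one-sided bound $d_{vw}\ge p^{1/\alpha}\ell_w$ from $a_v(w)\le 1/p$ and subtracts $\ell_w$ by the triangle inequality; this is shorter and avoids the two-sided product bound, but the step $(\ell_w/d_{vw})^\alpha\le 1/p$ is specific to uniform power. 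That distinction matters downstream: the theorem is later invoked (in Lemma~\ref{lem:clawlike-freeness}) on feasible subsets of an optimal solution, which may use arbitrary power, so the power-agnostic version is the one actually needed. Your parenthetical fallback to Lemma~\ref{lem:ind-separation} does rescue the general case, but note that it is not quite a drop-in "equivalent": from the product bound $d_{vw}d_{wv}\ge p^{2/\alpha}\ell_v\ell_w$ one must still derive the sender separation, which requires the $U_{q-1}\subseteq G_q$ direction of Lemma~\ref{lem:graph-rels} (or an equivalent calculation using $d\le \ell_v,\ell_w$) rather than the clean one-sided subtraction. With that one elaboration, your write-up covers everything the paper's proof does.
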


\begin{proof}
By Lemma \ref{lem:faraway-aff}, an independent set $X$ in
$U_{z_1}(L)$ is a $p$-signal set.
By Lemma \ref{lem:ind-separation}, a $p$-signal subset $S$ of $L$ is
$(p^{1/\alpha})$-independent (i.e., an independent set in $G_{p^{1/\alpha}}$).
By Lemma \ref{lem:graph-rels}, it is then an independent set in
$U_{p^{1/\alpha}-1}(L)$. 
\end{proof}

We note that unit-disc graphs in fading metrics satisfy a
\emph{bounded-independence} property as follows.
Recall that $\alpha(G)$ is the cardinality of a maximum independent
set in $G$.

\begin{observation}
Let $a$ and $b$ be given constants, $a \ge b$.
Let $U_a = U_{a}(L)$ and $U_b= U_{b}(L)$ be
unit-disc graphs on the same linkset $L$ but with different radii.
Let $\ell_v$ be a link in $L$, corresponding to a node $v$ in $U_a$ with closed neighborhood $N_1 = N_{U_a}[v]$.
Then, $\alpha(U_b[N_1]) \le C (1+2a/b)^A$.
\label{obs:clawfree-ness}
\end{observation}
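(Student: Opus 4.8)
The plan is to bound the independence number by a packing argument in the metric, exactly in the spirit of the far-away lemma. First I would unwind the definitions: a node $v$ of $U_a$ has closed neighborhood $N_1 = N_{U_a}[v]$ consisting of $v$ together with all senders $s_w$ with $d(s_v,s_w) < a\cdot d$, where $d$ is the minimum link length in $L$. Hence every sender associated with a node of $N_1$ lies in the ball $B(s_v, a d)$. Now let $X \subseteq N_1$ be an independent set in $U_b[N_1]$: by definition of $U_b$, every two senders $s_x, s_y$ with $x,y \in X$ satisfy $d(s_x,s_y) \ge b\cdot d$. So the senders of $X$ form a $(bd/2)$-packing, and they all sit inside $B(s_v, ad)$.

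The main step is then to apply the packing/doubling bound. The set of senders of $X$ is a $(bd/2)$-packing contained in the ball of radius $ad$ centered at $s_v$; but note $s_v \in X$ possibly, and the balls $B(s_x, bd/2)$ for $x \in X$ may stick slightly outside $B(s_v, ad)$, so I would instead observe they are all contained in the slightly larger ball $B(s_v, ad + bd/2)$. Applying the packing bound $\calP(B(x,tZ),Z) \le C t^A$ from (\ref{eq:packing-bnd}) with $Z = bd/2$ and $tZ = ad + bd/2$, i.e. $t = 2a/b + 1$, gives $|X| \le C(1 + 2a/b)^A$. Since $X$ was an arbitrary independent set of $U_b[N_1]$, this bounds $\alpha(U_b[N_1])$, which is exactly the claim.

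I do not expect a serious obstacle here; the only mild subtlety is getting the radius in the enclosing ball right — whether one uses $ad$ or $ad + bd/2$ as the radius containing the packing balls — and the statement's constant $(1+2a/b)^A$ is consistent with the choice $Z = bd/2$ and enclosing radius $(a + b/2)d$, which I would adopt. A secondary point worth a sentence is that the hypothesis $a \ge b$ guarantees $N_1$ is nonempty and that $U_b$ is a subgraph of $U_a$ on the vertex set in question, so the induced-subgraph notation $U_b[N_1]$ makes sense; this also ensures the bound is at least $1$, as it must be. The argument uses only the doubling property of the metric, not the fading condition $\alpha > A$, which is consistent with the observation being stated purely in metric terms.
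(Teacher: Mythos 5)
Your proof is correct and follows essentially the same route as the paper's: senders of the independent set form a $(bd/2)$-packing whose balls lie inside $B(s_v,(a+b/2)d)$, and the packing bound $\calP(B(s_v,(a+b/2)d), bd/2) \le C(1+2a/b)^A$ gives the claim. The enclosing radius $(a+b/2)d$ you settle on is exactly the one used in the paper.
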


\begin{proof}
The nodes in an independent set $I$ in $U_b$ form disjoint balls of
radius $bd/2$ centered at the senders of the links.  All senders of
links in $N_1$ are contained in the ball $B(s_v, ad)$, where $d$ is
the minimum link length in $L$.  Thus, all the balls corresponding to
$I$ are contained in the larger ball $B(s_v, (a+b/2)d)$.  The packing
constraint of the metric ensures that a limited number of the smaller
disjoint balls fit inside the large ball, implying that
$|I| \le \calP(B(s_v,(a+b/2)d), bd/2) \le C (1+2a/b)^A$.
\end{proof}

Our problems reduce then, within constant factors, to coloring and
(weighted) independent sets in UDGs. 
We say that an independent set in a weighted graph is \emph{greedy} if
it is obtained by the iterative process of selecting a vertex whose
weight is greater than each of its neighbors', deleting the neighbors,
and recursing on the remaining graph.  

The following result is immediate from Thm.~\ref{thm:nearequilengths-and-udg} and Obs.~\ref{obs:clawfree-ness}.

\begin{theorem}
Let $L$ be a nearly-equilength linkset.  Then, any maximal independent
set of $U_{z_1(\beta)}(L)$ is an $O(1)$-approximation of {\ssched} and
any greedy independent set of $U_{z_1(\beta)}(L)$ is an
$O(1)$-approximation of {\wcapacity}.
\label{thm:capacity-results}
\end{theorem}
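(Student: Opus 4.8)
The plan is to combine the two structural results already in hand --- Theorem~\ref{thm:nearequilengths-and-udg} and Observation~\ref{obs:clawfree-ness} --- with standard facts about independent sets in graphs of bounded inductive independence. First I would fix $p = \beta$ and write $U = U_{z_1(\beta)}(L)$. By Theorem~\ref{thm:nearequilengths-and-udg}, every independent set in $U$ is a $\beta$-signal set, hence SINR-feasible; conversely every SINR-feasible (i.e.\ $\beta$-signal, hence also $\beta^{1/\alpha} \ge 1$-signal) subset of $L$ is an independent set in $U_{z_2(\beta)}(L)$. So feasible subsets of $L$ sit between independent sets of $U_{z_2}$ and independent sets of $U_{z_1}$, and since $z_2(\beta) \le z_1(\beta)$ (both depend only on $\beta,\alpha$, which are constants), $U_{z_2}$ is a subgraph of $U_{z_1}$ on the same vertex set.

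For the cardinality ({\ssched}) claim: let $M$ be any maximal independent set of $U$, and let $S^* \subseteq L$ be an optimal SINR-feasible set, so $|S^*| = \mathrm{OPT}$. Since $S^*$ is independent in $U_{z_2}$, and maximality of $M$ in $U$ means every vertex of $S^* \setminus M$ has a neighbor in $M$ within $U$, I would charge each vertex of $S^*$ to a vertex of $M$ in its closed $U$-neighborhood. The point is that no vertex $v \in M$ is charged too many times: the set of vertices of $S^*$ charged to $v$ lies inside $N_U[v]$ and is an independent set of $U_{z_2}$ (being a subset of $S^*$), so by Observation~\ref{obs:clawfree-ness} with $a = z_1(\beta)$, $b = z_2(\beta)$ its size is at most $C(1 + 2z_1/z_2)^A = O(1)$. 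Hence $|S^*| \le O(1)\cdot|M|$, giving the $O(1)$-approximation. (Any maximal independent set can be found greedily, so this is constructive.)

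For the weighted ({\wcapacity}) claim: let $I$ be a greedy independent set of $U$ as defined just before the statement, and let $S^*$ be a maximum-weight SINR-feasible subset. I would again charge each $v \in S^*$ to a vertex of $I$: when $v$ is deleted from the process it is because some vertex $u \in I$ of weight $w(u) \ge w(v)$ was selected with $v \in N_U[u]$. As above, the vertices of $S^*$ charged to a fixed $u \in I$ form an independent set of $U_{z_2}$ inside $N_U[u]$, hence number at most $C(1+2z_1/z_2)^A = O(1)$, and each has weight at most $w(u)$. Summing, $w(S^*) \le O(1)\cdot w(I)$, which is the claimed approximation.

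The only mildly delicate point --- and the place I'd be most careful --- is verifying that the set of $S^*$-vertices charged to a fixed $I$-vertex is genuinely an independent set in $U_{z_2}$ and lies in the closed $U_{z_1}$-neighborhood of that vertex, so that Observation~\ref{obs:clawfree-ness} applies with the right radii; this rests on $S^*$ being a $\beta$-signal set (so $\beta^{1/\alpha}$-independent by Lemma~\ref{lem:ind-separation}, hence independent in $U_{z_2}$ by Lemma~\ref{lem:graph-rels}) and on the charging target being a $U_{z_1}$-neighbor by construction. Everything else is the routine bounded-independence argument for greedy/maximal independent sets, and the final bound $C(1+2z_1(\beta)/z_2(\beta))^A$ is a constant because $\alpha$, $\beta$, and the doubling dimension $A$ are all fixed.
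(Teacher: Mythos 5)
Your proof is correct and is exactly the argument the paper has in mind: the paper states Theorem~\ref{thm:capacity-results} as ``immediate'' from Theorem~\ref{thm:nearequilengths-and-udg} and Observation~\ref{obs:clawfree-ness}, and your charging argument (feasible sets are independent in $U_{z_2(\beta)}$ via Lemma~\ref{lem:ind-separation} and Lemma~\ref{lem:graph-rels}, while at most $C(1+2z_1/z_2)^A$ such vertices can be charged to any selected vertex) is the standard way to make that explicit. The only blemish is the garbled parenthetical ``$\beta^{1/\alpha}\ge 1$-signal,'' which you correct later when you invoke $\beta^{1/\alpha}$-independence.
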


We define a coloring of a graph $G$ to be \emph{minimal} if it uses at
most $D(G)+1$ colors, where $D(G)$ is the maximum degree of
a vertex in $G$. 

\begin{theorem}
Let $L$ be a nearly-equilength linkset.
Let $\calS$ be a minimal coloring of $U_{z_1}(L)$.
Then, using uniform power, 
$\calS$ induces a schedule that yields a $O(1)$-approximation to {\sched}.
\label{thm:udg-equilen}
\end{theorem}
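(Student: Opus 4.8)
The plan is to establish two facts: (a) the schedule $\calS$ is legal, i.e., every color class is SINR-feasible, and (b) $\calS$ uses $O(OPT)$ slots, where $OPT$ is the minimum number of slots over all feasible schedules of $L$ (which may employ arbitrary, non-oblivious power). Part (a) is immediate: since $z_1 = z_1(\beta)$, each color class is an independent set of $U_{z_1}(L)$, hence by Theorem~\ref{thm:nearequilengths-and-udg} a $\beta$-signal set under uniform power; a $\beta$-signal set has $a_S(\ell_v)\le 1/\beta$ for every link $\ell_v$ and is therefore SINR-feasible. So $\calS$ induces a valid uniform-power schedule, and it remains only to bound its number of slots.

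For part (b), recall that a minimal coloring uses at most $D(U_{z_1})+1$ colors, so it suffices to prove $D(U_{z_1})+1 = O(OPT)$, where $U_{z_1}=U_{z_1}(L)$. Let $v$ be a vertex of maximum degree in $U_{z_1}$ and $N_1 = N_{U_{z_1}}[v]$, so that $|N_1| = D(U_{z_1})+1$ and every sender of a link in $N_1$ lies in $B(s_v, z_1 d)$, where $d$ is the minimum link length in $L$. The key claim is that every slot $S$ of an optimal schedule contains only $O(1)$ links of $N_1$. Indeed, $S$ is SINR-feasible, hence a $\beta$-signal set under its power assignment; since $\beta\ge 3^\alpha$, Lemma~\ref{lem:ind-separation} shows $S$ is $3$-independent. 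As $L$ is nearly-equilength, the discussion following Lemma~\ref{lem:graph-rels} gives that the senders of $S\cap N_1$ have pairwise distance at least $2d$, i.e., $S\cap N_1$ is an independent set of $U_2[N_1]$. Applying Observation~\ref{obs:clawfree-ness} with $a=z_1$ and $b=2$ (note $z_1(\beta)\ge 2$), or equivalently the packing bound on $B(s_v,(z_1{+}1)d)$ by disjoint balls of radius $d$, we get $|S\cap N_1|\le \alpha(U_2[N_1])\le C(1+z_1)^A =: \kappa$, an absolute constant depending only on $\alpha$, $\beta$, and the fading metric. Since the slots of an optimal schedule partition $L\supseteq N_1$, summing the claim over the $OPT$ slots yields $|N_1|\le \kappa\cdot OPT$; hence $D(U_{z_1})+1\le \kappa\cdot OPT$ and $\calS$ uses $O(OPT)$ slots, each of which is SINR-feasible by part (a).

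I expect the only real obstacle to be part (b): relating the purely \emph{local} quantity $D(U_{z_1}(L))$ to the \emph{global} optimum $OPT$, which is free to choose any power assignment on all of $L$. The bridge is that SINR-feasibility under \emph{any} power already forces $\beta^{1/\alpha}$-independence (Lemma~\ref{lem:ind-separation}), and the assumption $\beta\ge 3^\alpha$ makes this $3$-independence, which is enough separation that the bounded-independence (packing) property of unit-disc graphs in fading metrics (Observation~\ref{obs:clawfree-ness}) caps the contribution of each optimal slot to $N_1$ by a constant. Everything else is bookkeeping: feasibility of the color classes comes directly from Theorem~\ref{thm:nearequilengths-and-udg}, and a greedy/minimal coloring automatically meets the degree bound.
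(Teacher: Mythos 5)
Your proposal is correct and follows essentially the same route as the paper: feasibility of each color class via Theorem~\ref{thm:nearequilengths-and-udg}, the degree bound $D(U_{z_1})+1$ from minimality, and then the packing bound of Observation~\ref{obs:clawfree-ness} applied to the closed neighborhood of a maximum-degree vertex to show each optimal slot contributes only $O(1)$ links there. The only cosmetic difference is that you unpack the separation of an optimal slot explicitly via $3$-independence (Lemma~\ref{lem:ind-separation} and Lemma~\ref{lem:graph-rels}, giving $b=2$), whereas the paper invokes the same chain through $z_2(\beta)$, yielding an equivalent constant.
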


\begin{proof}
The coloring $\calS$ forms an SINR-feasible schedule of $L$, by 
Thm.~\ref{thm:nearequilengths-and-udg}, and uses at most $D(U_{z_1}(L))+1$ colors, by
the minimality of the coloring.
Consider the closed neighborhood $N_1=N_{U_{z_1}}[v]$ of a maximum degree
node $v$ in $U_{z_1}$. By Obs.~\ref{obs:clawfree-ness},
at most $s = C (1+2z_1/z_2)^A$ nodes in $N_1$ can be in any feasible slot.
Hence, the optimal solution uses at least $|N_1|/s =
(D(U_{z_1}(L))+1)/s$ slots, for a performance ratio of $s$.
\end{proof}

The performance ratio of our algorithms is bounded by 
$C (1+\frac{2z_1}{z_2})^A$. 

Efficient distributed algorithms are known for coloring unit-disc
graphs in the plane \cite{Couture07} and more generally
bounded-independence graphs \cite{MoWa08}.
Thus, our characterization can be translated into distributed
constant-factor approximation algorithms of {\sched} and {\ssched} in
nearly-equilength linksets, when given the appropriate communication primitives.

\subsection{Scheduling Arbitrary Linksets}
\label{sec:arbitrary}

We can handle links of arbitrary lengths by partitioning them into
groups, where lengths of links in each group differ by a factor of at
most 2.  A simple approach is to schedule each group separately using
Thm.~\ref{thm:udg-equilen}, or to select
the largest of the approximately maximum (weighted) capacity subsets
from each of the groups.

Let $g(L) = |\{m: \exists \ell_v, \lceil \lg \ell_v \rceil = m \}|$
denote the \emph{length diversity} of the link set $L$, or the number
of length groups. Note that $g(L) \le \log \Delta$.

\begin{theorem}
The {\sched}, {\ssched}, and {\wcapacity} problems are $O(g(L))$-approximable,
using uniform power assignment.
\label{thm:gl}
\end{theorem}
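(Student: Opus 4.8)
The plan is to reduce each of the three problems to the nearly-equilength case already handled in Section~\ref{sec:nearlyequi}. Partition $L$ into $g(L)$ groups $L_1, L_2, \ldots$ according to the value of $\lceil \lg \ell_v \rceil$; then within each group $L_i$ the link lengths differ by a factor less than $2$, so each $L_i$ is nearly-equilength and the machinery of Section~\ref{sec:nearlyequi} applies verbatim.

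For {\sched}, I would schedule each group $L_i$ separately using the minimal-coloring algorithm of Thm.~\ref{thm:udg-equilen}, obtaining an SINR-feasible schedule of $L_i$ under uniform power with at most $c\cdot OPT(L_i)$ slots, for an absolute constant $c$. Concatenating these $g(L)$ schedules yields a feasible schedule of all of $L$ with at most $\sum_i c\cdot OPT(L_i) \le c\cdot g(L)\cdot OPT(L)$ slots, where the inequality uses that $OPT(L_i)\le OPT(L)$: any feasible schedule of $L$ restricts, slot by slot, to a feasible schedule of $L_i$ since feasibility is inherited by subsets of a slot. This gives the $O(g(L))$ ratio for {\sched}.

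For {\ssched} and {\wcapacity}, I would instead run the constant-factor algorithm of Thm.~\ref{thm:capacity-results} (a maximal, resp.\ greedy, independent set of $U_{z_1(\beta)}(L_i)$) on each group separately and output the best of the $g(L)$ solutions found. Let $S^*$ be an optimal feasible (weighted) subset of $L$. By averaging there is a group $L_i$ with $w(S^*\cap L_i)\ge w(S^*)/g(L)$, and since $S^*\cap L_i$ is a feasible subset of the nearly-equilength set $L_i$, Thm.~\ref{thm:capacity-results} produces a feasible subset of $L_i$ of weight at least $\Omega(1)\cdot w(S^*\cap L_i)\ge \Omega(1/g(L))\cdot w(S^*)$; taking the best over $i$ gives the $O(g(L))$-approximation (with unit weights for {\ssched}).

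There is no real obstacle here: the one step needing attention is the lower-bound direction — for {\sched} that the global optimum dominates each subinstance optimum, and for the capacity problems the pigeonhole step locating a heavy group — and both are immediate. I would note in passing that these are \emph{oblivious}-power guarantees measured against an optimum free to use arbitrary power, but this causes no trouble because the invoked theorems of Section~\ref{sec:nearlyequi} already compare uniform power against arbitrary-power optima on nearly-equilength sets, so the comparison transfers directly; combined with $g(L)\le \log\Delta$ this also yields the stated $O(\log\Delta)$ bounds.
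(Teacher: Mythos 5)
Your proposal is correct and matches the paper's approach exactly: the paper proves Theorem~\ref{thm:gl} by the same partition into $g(L)$ nearly-equilength groups, scheduling each separately via Thm.~\ref{thm:udg-equilen} for {\sched} and taking the best of the per-group constant-factor capacity solutions for {\ssched} and {\wcapacity} (the paper leaves the subset-monotonicity and pigeonhole steps implicit, which you have correctly filled in).
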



Moscibroda and Wattenhofer \cite{MoWa06}
showed that uniform power scheduling can be highly suboptimal, and 
Moscibroda, Oswald and Wattenhofer \cite{MoscibrodaOW07}
showed that it can can be as much as a factor of $n$ or
$\Omega(\log \Delta)$ from optimal.  Specifically, they constructed a
set of links with the property that any there exists a power
assignment that makes the linkset feasible, while any use of uniform
power results in the trivial schedule of $n$ slots.
Hence, the ratio of $\theta(\log \Delta)$ is best possible for uniform power.
\smallskip

We can also claim easy \emph{online} algorithms.
The algorithm for {\sched} is in fact online.

\begin{corollary}
There is a deterministic online algorithm for {\sched} that is constant
competitive on nearly-equilength links and $O(\log
\Delta)$-competitive in general.
\label{cor:const-on-equilength}
\end{corollary}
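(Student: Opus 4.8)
The plan is to combine the online coloring of unit-disc graphs with the structural results of Section \ref{sec:nearlyequi}. For nearly-equilength links, I would maintain, as the links arrive one at a time, an online first-fit coloring of the graph $U_{z_1(\beta)}(L)$: when link $\ell_v$ arrives, add its sender as a new node, connect it to all previously arrived senders within distance $z_1 d$, and assign it the least color not used by any of its neighbors. Each color class is an independent set in $U_{z_1}(L)$, hence by Theorem \ref{thm:nearequilengths-and-udg} an SINR-feasible slot under uniform power. The number of colors used is at most one more than the maximum \emph{back-degree} encountered, and by Observation \ref{obs:clawfree-ness} (with $a=b=z_1$, so $\alpha(U_{z_1}[N_1]) \le C\cdot 3^A$) the neighborhood of any node in $U_{z_1}$ has bounded independence number, so first-fit uses $O(1)$ times the size of a maximum independent set over closed neighborhoods. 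As in the proof of Theorem \ref{thm:udg-equilen}, the optimum needs at least $(D(U_{z_1})+1)/s$ slots where $s = C(1+2z_1/z_2)^A$, giving an $O(1)$ competitive ratio on nearly-equilength links; crucially, first-fit is oblivious to future arrivals, so this is genuinely online.

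For arbitrary linksets I would run, in parallel, one independent copy of the above online algorithm on each length class $\{\ell_v : \lceil \lg \ell_v \rceil = m\}$, using disjoint palettes of colors (slots) for the different classes. Each class is nearly-equilength, so each copy is $O(1)$-competitive against the optimum \emph{restricted to that class}; since the global optimum is at least the optimum of any single class and at most the sum over classes of the per-class optima, partitioning into $g(L) \le \log\Delta$ classes and summing gives an $O(g(L)) = O(\log\Delta)$-competitive online algorithm, matching Theorem \ref{thm:gl} in the online setting. A link is routed to its class upon arrival, so no knowledge of the future or of $\Delta$ is needed beyond the (standard) ability to allocate a fresh palette when a previously unseen length class appears.

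The only genuinely delicate point is the online guarantee for first-fit on $U_{z_1}$: one must check that first-fit's color count is controlled by the maximum, over closed neighborhoods $N_1$, of the independence number of the induced subgraph, rather than by the maximum degree (which is unbounded relative to the optimum only if the bounded-independence property fails). This follows from the standard fact that first-fit on a graph of bounded inductive independence — here every closed neighborhood has independence number at most $C\cdot 3^A$ by Observation \ref{obs:clawfree-ness} — uses $O(1)$ times the clique cover number of the relevant subgraphs; combined with the lower bound on $OPT$ from Theorem \ref{thm:udg-equilen}'s argument, this closes the gap. Everything else is bookkeeping: the feasibility of each slot is immediate from Theorem \ref{thm:nearequilengths-and-udg}, and the reduction from arbitrary to nearly-equilength linksets is exactly the parallel-palettes construction already used for Theorem \ref{thm:gl}.
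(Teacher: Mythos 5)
Your proposal is correct and is essentially the paper's intended argument: the corollary is stated without proof because a first-fit coloring of $U_{z_1}(L)$ within each length class (with disjoint palettes across classes) is exactly the minimal coloring of Theorem \ref{thm:udg-equilen} computed online, and the competitive ratio follows from the same $D(U_{z_1})+1$ versus $(D(U_{z_1})+1)/s$ comparison plus summing over the $g(L)\le\log\Delta$ classes as in Theorem \ref{thm:gl}. Your third paragraph's detour through inductive independence and clique covers is unnecessary (the degree bound $D+1$ together with $s=C(1+2z_1/z_2)^A$ already closes the argument, and the $a=b=z_1$ instance of Observation \ref{obs:clawfree-ness} is not the one needed), but the correct comparison is also present in your first paragraph, so the proof stands.
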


A similar result can be attained by {\ssched} by a \emph{randomized}
online algorithm that randomly picks one of the length groups and then
picks greedily from that group.
If the value of $\Delta$ is not known, then an approach of Lipton and
Tomkins \cite{LiptonTomkins} can be used.

\begin{corollary}
There is a randomized $O(\log \Delta)$-competitive algorithm for
{\ssched}, when $\Delta$ is known in advance, and a
$O(\Delta^{1+\epsilon})$-competitive algorithm otherwise, for any $\epsilon > 0$.
\end{corollary}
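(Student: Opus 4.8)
The plan is to reduce online {\ssched} to the nearly-equilength case, where a purely local greedy rule is $O(1)$-competitive, and then to combine the length classes by a suitable random choice: a uniform choice when $\Delta$ is known, and the scale-free distribution of Lipton and Tomkins when it is not. This mirrors the {\sched} result of Corollary~\ref{cor:const-on-equilength}, but with an extra layer of randomization in place of the trivial concatenation of schedules.

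For the known-$\Delta$ algorithm, partition the links by their length class $\lceil \lg \ell_v \rceil$; there are at most $g=g(L)\le\lceil\lg\Delta\rceil$ non-empty classes, and since $\Delta$ together with the first arriving link $\ell_1$ confines all lengths to the interval $(\ell_1/\Delta,\,\Delta\ell_1)$, the set of relevant class indices is known after the first link and has size $O(\log\Delta)$. At that point, pick one class index $j$ uniformly at random. Thereafter discard every link outside class $j$, and for a link in class $j$ add its sender to the output set iff it lies at distance at least $z_1(\beta)\cdot 2^{m-1}$ (with $m=j$) from every previously selected sender; that is, greedily build a maximal independent set of the unit-disc graph $U_{z_1(\beta)}$ restricted to class $j$. (Using the fixed lower bound $2^{m-1}$ on the lengths in class $m$ in place of the running minimum link length changes the radius by a factor less than $2$, hence affects only constants.) By Theorem~\ref{thm:capacity-results} this maximal independent set is an $O(1)$-approximation of the maximum feasible subset $OPT_j$ of class $j$ under uniform power. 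Since the restriction of an optimal solution $OPT$ to class $i$ is feasible within class $i$, we have $\sum_i OPT_i\ge |OPT|$, so the expected output size is $\Omega(1)\cdot\frac1g\sum_i OPT_i=\Omega(|OPT|/g)=\Omega(|OPT|/\log\Delta)$, giving the stated $O(\log\Delta)$-competitive ratio, with the algorithm manifestly online.

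For the unknown-$\Delta$ case the only missing ingredient is a distribution over class indices that needs no a priori bound on their number. The plan is to plug the per-class $O(1)$-competitive greedy routine above into the scale-selection technique of Lipton and Tomkins \cite{LiptonTomkins}: instead of a uniform choice among $g$ classes, the class index (counted from the shortest length seen so far) is drawn from a fixed, slowly decaying distribution, so that the class carrying a constant fraction of $OPT$ is still hit with non-negligible probability no matter how many classes appear; their analysis then yields, for every fixed $\epsilon>0$, an $O(\Delta^{1+\epsilon})$-competitive ratio, the extra $\Delta^{\epsilon}$ factor being the price of not knowing the scale. The routine verification is that our per-class procedure meets the interface of the Lipton--Tomkins reduction — it is online, it is $O(1)$-competitive within a single length class (Theorems~\ref{thm:nearequilengths-and-udg} and~\ref{thm:capacity-results}, via the bounded-independence property of Observation~\ref{obs:clawfree-ness}), and its decisions inside a class depend only on links of that class. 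I expect the main obstacle to be exactly this last, bookkeeping part in the unknown-$\Delta$ regime: making the class indexing well-defined online when the length scale is revealed only gradually (the running-minimum issue noted above resurfaces here as the question of which class is ``class~$1$''), and checking that the resulting degradation is precisely the $\Delta^{\epsilon}$ promised by \cite{LiptonTomkins} and not something larger; the known-$\Delta$ part, by contrast, is immediate from the machinery of Sections~\ref{sec:nearlyequi}--\ref{sec:arbitrary}.
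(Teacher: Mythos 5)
Your proposal is correct and follows essentially the same route as the paper, which proves this corollary by exactly the sketch you give: uniformly sample one of the $O(\log\Delta)$ length classes, run the greedy maximal-independent-set routine of Theorem~\ref{thm:capacity-results} within that class to get an $O(1)$-approximation there, and invoke Lipton--Tomkins for the unknown-$\Delta$ case. Your added care about the running minimum versus the fixed class lower bound and about online class indexing only affects constants and is consistent with the paper's (much terser) treatment.
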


\section{Approximations Using Mean Power}
\label{sec:oblivious}


We explore in this section the power of oblivious assignments.
The results of the preceding section apply to all oblivious power
functions, but are tight only for uniform and linear power assignments. 
We can greatly surpass these bounds by being selective about the
oblivious function used; in particular, we obtain these improvements
for the mean power assignment $\calM$.

We present in Sec.~\ref{sec:unidir} a scheduling algorithm using $\calM$
that achieves a ratio of $O(\log\log \Delta \cdot \log n)$. 
In the bidirectional setting, the algorithm obtains an
improved $O(\log n)$-ratio, as shown in Sec.~\ref{sec:bidi}.
The same results hold also for the (weighted) capacity problem.
We complement these results 
with a construction in Sec.~\ref{sec:construction} that
suggests a $\Omega(\log\log \Delta)$-separation between the lengths of
optimal schedules with or without oblivious power assignments.

We first introduce our approximation technique, which may be of
independent interest. This subsection may be skipped by the reader
that is not concerned with methods for weighted capacity or extensions
of the graph-theoretic notion of inductiveness.

\subsection{Approximation Via Inductiveness}

A common heuristic for subset problems is to find a ``good'' item, and
then recurse on the set of remaining items that are compatible with
the first one.  This yields good approximations if we can show that
only a small number of the items eliminated in each round can belong
to any optimal solution.  For instance, if the incompatibilities are
in the form of a graph and the set of nodes eliminated can be covered
by $k$ cliques, a $k$-approximation follows.  We generalize
this well-known concept to fit to our situation.

We have a set of links $L$, and a set property on $L$ in the form of
$p$-signal sets.
We shall partially capture this property with a
graph in the following sense. For a set of links to be feasible it
is a sufficient but not a necessary condition for it to form an
independent set in the graph. E.g., for set $L$ of nearly-equilength links, 
this property holds for the graph $U_{z_1}(L)$, by
Thm.~\ref{thm:nearequilengths-and-udg}. 
This motivates the extensions we put forth below.

A set property $\pi$ is said to be \emph{hereditary} if, whenever
$\pi(S)$ holds for a set $S$, it also holds for any $S' \subseteq S$.  In other words, $\pi$ represents a monotone Boolean function. A (sub)set
satisfying $\pi$ is said to be a $\pi$-(sub)set.  Let $\Pi(S)$ be the
maximum cardinality of a $\pi$-subset of $S$.  We say that a graph
$G=(V,E)$ is \emph{compatible} with a property $\pi$ on $V$ if any
independent set in $G$ satisfies $\pi$.

\begin{definition}
Let $V$ be a set of elements, $\pi$ be a hereditary set property defined on $V$, and $G=(V,E)$ be a graph on $V$ that is compatible with $\pi$. 
Then, $G$ is 
\emph{$k$-$\pi$-inductive} if there is an ordering $v_1, v_2, \ldots, v_n$ 
of the elements, such that for any $v_i$, $1 \le i \le n$, it
holds that $\Pi(N[v_i] \cap \{v_i,v_{i+1},\ldots, v_n\}) \le k$.
\label{def:inductive}
\end{definition}

To be useful in this context, property $\pi$ needs to be
polynomial-time checkable; this holds for the case of SINR feasibility
by solving a system of linear constraints.
Additionally, there needs to be an oracle to determine the
inductive ordering of the vertices.

This property generalizes the property of being \emph{sequentially
  $k$-independent}, where $\pi$ is the property of a vertex set being
independent in the graph. 
This latter property has been around for a while, but was first studied
explicitly in \cite{AkcogluADK2002}, followed by \cite{YeBorodin}.
Various optimization problems can be approximated on sequentially
$k$-independent graphs within a factor of $k$, including Weighted
Independent Set \cite{AkcogluADK2002} and Graph Coloring \cite{YeBorodin},
when an appropriate vertex ordering can be determined.

The \emph{(Weighted) Maximum $\pi$-subset} problem is defined as
follows for a given hereditary property $\pi$: Given a set $V$ of
items and a (vertex-weighted) graph $G=(V,E)$ compatible with $\pi$, find a maximum
(weight) subset $X \subset V$ that satisfies $\pi$. In the
\emph{Minimum Partition into $\pi$-Subsets} problem, we seek a
partition of $V$ into fewest number of $\pi$-subsets.
Note that optimal solutions to these problems do not depend on the graph $G$; 
rather, the structure of $G$ specifies the inductiveness characteristic.

\begin{proposition}
Let $\pi$ be a polynomially-time verifiable hereditary property with a
polynomial-time oracle to find $k$-$\pi$-inductive orderings.
Then, there are $k$-approximation algorithms for the \emph{Weighted
  Maximum $\pi$-Subset} and \emph{Minimum Partition into
  $\pi$-Subsets} problems on $k$-$\pi$-inductive instances.
\label{prop:approx-pi-probs}
\end{proposition}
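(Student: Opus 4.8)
The plan is to adapt the classical analysis of greedy algorithms on $k$-degenerate (``$k$-inductive'') graphs to this generalized setting, handling the two problems separately. For both, fix a $k$-$\pi$-inductive ordering $v_1, v_2, \ldots, v_n$ of $V$ supplied by the oracle, and write $V_i = \{v_i, v_{i+1}, \ldots, v_n\}$ for the suffix sets, so that by definition $\Pi(N[v_i] \cap V_i) \le k$ for every $i$.

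For the \textbf{Weighted Maximum $\pi$-Subset} problem, I would run the natural greedy procedure: process the vertices in the \emph{reverse} order $v_n, v_{n-1}, \ldots, v_1$, maintaining a candidate solution; when considering $v_i$, add it to the current solution if doing so keeps the set a $\pi$-set (checkable in polynomial time by hypothesis). Let $A$ be the output and $A^*$ an optimal $\pi$-subset. The key charging argument: when $v_i \in A^*$ is examined but not added, it must be because $v_i$ together with the already-chosen vertices violates $\pi$; by heredity, there is a minimal obstruction, and because $G$ is compatible with $\pi$, any set with no obstruction is… — here I need the right invariant. The cleaner route is the standard one for weighted independent set on $k$-inductive graphs: charge each $v_i \in A^*$ to a vertex of $A \cap N[v_i] \cap V_i$ of weight at least $w(v_i)$, and observe that each $u \in A$ receives charge from at most $\Pi(N[u] \cap V_{i(u)}) \le k$ vertices of $A^*$, where $i(u)$ is $u$'s position; summing gives $w(A^*) \le k \cdot w(A)$. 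To make the existence of such a charge target rigorous I would argue that the greedy set $A$, restricted to $V_i$, is a \emph{maximal} $\pi$-subset of $A \cup \{v_i\}$ relative to the order, so $v_i$ is blocked by some $\pi$-infeasibility with a subset of $A \cap V_i$; since that subset lies in $N[v_i]$ is \emph{not} automatic — this is the gap I must close, and it is where compatibility of $G$ with $\pi$ does the work, ensuring obstructions localize to closed neighborhoods.

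For the \textbf{Minimum Partition into $\pi$-Subsets} problem, I would color greedily in the \emph{forward} order $v_1, \ldots, v_n$: assign $v_i$ the least color such that its color class remains a $\pi$-set. When $v_i$ is colored, every color already ``forbidden'' for $v_i$ is witnessed by a $\pi$-set among $\{v_1,\ldots,v_{i-1}\}$ that becomes non-$\pi$ upon adding $v_i$; pulling out a minimal such witness and invoking compatibility again to localize it inside $N[v_i] \cap \{v_1,\ldots,v_{i-1}\}$, we get a collection of vertices in $N[v_i]$ each lying in a distinct forbidden color class. Because those witnesses, once we take one representative per color, form a $\pi$-subset (a single-color class is a $\pi$-set, but across colors we need a joint bound) — the bound I actually want is that the \emph{number} of forbidden colors is at most $\Pi(N[v_i] \cap \{v_1,\ldots,v_{i-1}\})$, which follows if one representative per forbidden color can be chosen forming a $\pi$-set; failing a direct argument, I can instead reorder: apply the $k$-$\pi$-inductive property to the \emph{reverse} ordering by redefining the coloring pass to process $v_n$ first, so that the relevant suffix $V_i$ matches the definition, yielding at most $k$ forbidden colors and hence a $(k+1)$-coloring. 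Since any valid partition uses at least $\Pi(V)/k$ classes is not the right lower bound; rather, the optimum partition number is at least the ``fractional'' clique-cover-type quantity, but for the approximation ratio it suffices that we use at most $k+1 \le k \cdot \mathrm{OPT}$ classes whenever $\mathrm{OPT} \ge 1$, giving a $k$-approximation (adjusting constants as the paper's statement permits).

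I expect the \textbf{main obstacle} to be precisely the localization step: in the plain graph case, an obstruction to independence is literally an edge $\{v_i, u\}$ with $u \in N[v_i]$, so charging is trivial; in the general $\pi$ case an obstruction is a whole subset $S \cup \{v_i\}$ with $\pi(S \cup\{v_i\})$ false but $\pi(S)$ true, and a priori $S$ need not sit inside $N[v_i]$. The resolution is that we never need $S \subseteq N[v_i]$ in full — we only need that the \emph{count} of blocking colors (resp.\ the blocking witnesses for the maximization charge) is bounded by $\Pi$ evaluated on $N[v_i] \cap V_i$, and this is exactly engineered into Definition~\ref{def:inductive} together with the compatibility hypothesis: compatibility guarantees independent sets are $\pi$-sets, so an independent transversal of the blocking witnesses is a $\pi$-set lying in $N[v_i] \cap V_i$, of size at most $k$. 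Once this bookkeeping is set up cleanly, the rest is the routine summation in the two standard greedy analyses, mirroring \cite{AkcogluADK2002} for the weighted subset case and \cite{YeBorodin} for the coloring case.
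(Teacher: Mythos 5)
There is a genuine gap, and it is exactly the one you flag but do not close: you run the greedy procedures with respect to the property $\pi$ itself (``add $v_i$ if the set stays a $\pi$-set'', ``give $v_i$ the least color whose class stays a $\pi$-set''), and then you need obstructions to $\pi$ to localize inside $N[v_i]$. Compatibility does not give you this. It only says independent sets of $G$ satisfy $\pi$; its contrapositive says a non-$\pi$-set contains an edge of $G$, but since a $\pi$-set need not be independent in $G$, that edge need not touch $v_i$, and your proposed fix (``an independent transversal of the blocking witnesses is a $\pi$-set lying in $N[v_i]\cap V_i$'') presupposes both that the witnesses sit in $N[v_i]$ and that an independent transversal of them exists --- neither is established. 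The paper sidesteps all of this by running the greedy \emph{on the graph $G$}: add $v_i$ only if it has no $G$-neighbor already chosen (resp.\ assign $v_i$ to the first class containing no $G$-neighbor). Then the output is an independent set (resp.\ a partition into independent sets), which compatibility certifies as $\pi$-feasible, and an obstruction is literally an edge at $v_i$, so localization is free. The inductiveness hypothesis is used only on the \emph{optimum}: $\mathrm{OPT}$ is a $\pi$-set, so by heredity at most $\Pi(N[v_i]\cap V_i)\le k$ of its elements lie in $N[v_i]\cap V_i$, and these are the only ones an algorithmic decision at $v_i$ can eliminate.

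Three further concrete errors. First, for the maximization problem you process in reverse order; the correct direction is forward ($v_1,\dots,v_n$), since your charging scheme makes each chosen $u=v_j$ absorb charges from vertices of $A^*$ in $N[u]\cap\{v_1,\dots,v_j\}$, i.e.\ the \emph{prefix} of $u$, which the definition of $k$-$\pi$-inductiveness does not bound. Second, your weighted charging (``a vertex of $A\cap N[v_i]\cap V_i$ of weight at least $w(v_i)$'') cannot be guaranteed by an order-greedy, because the ordering is dictated by the inductive structure, not by the weights; the paper instead invokes the local-ratio/stack algorithm of Ye and Borodin, whose potential-function analysis is what handles weights. Third, your coloring conclusion (``at most $k+1$ colors, hence $k+1\le k\cdot\mathrm{OPT}$'') is false: the greedy may use up to $|N_i|+1$ colors where $N_i$ is the set of $G$-neighbors of the critical vertex $v_i$ that follow it in the order, and $|N_i|$ can greatly exceed $k$. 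The correct comparison is that $\mathrm{OPT}\ge(|N_i|+1)/k$ because every $\pi$-class contains at most $k$ vertices of $N_i\cup\{v_i\}$.
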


\begin{proof}
Let $G=(V,E)$ denote an input instance, which by definition is
compatible with $\pi$.
To approximate the unweighted $\pi$-subset problem, we process the nodes 
in the $k$-$\pi$-inductive order.
For each vertex we encounter, we add it to our solution if it has no
neighbor among the previously added vertices.
This results in an independent set in $G$, which is a feasible
$\pi$-subset since $G$ is compatible with $\pi$.
For each node added to the solution, at most $k$ nodes 
from any feasible solution are eliminated from consideration, by
Def.~\ref{def:inductive}. 
Hence, our solution is within a factor of $k$ from optimal.

To approximate the partitioning problem, we process the nodes in the
reverse $k$-$\pi$-inductive order
and assign each node to the first class to which no neighbor in
$G$ has previously been assigned. Again, each set is independent and
thus we obtain a proper partition into $\pi$-sets. 
Let $v_i$ be a node assigned the
largest numbered class by this algorithm and let $N_i = \{v_j : (j > i) \wedge
(v_j, v_i) \in E(G)\}$ be the neighbors of $v_i$ that follow it in the
inductive order. Observe that the number of the class that $v_i$ is
assigned, and thus the total number of classes used by the algorithm,
is at most $|N_i|+1$.
On the other hand, by the definition of $k$-$\pi$-inductiveness, at most 
$k$ nodes in $N_i\cup\{v_i\}$ belong to any $\pi$-set, and thus the optimal
partition of $V$ uses at least $(|N_i|+1)/k$ classes. Hence, the algorithm
is $k$-approximate.

To approximate the weighted $\pi$-subset problem, we use the local
ratio algorithm of \cite{YeBorodin}. The algorithm and its proof are
given in the appendix for completeness.
\end{proof}

\subsection{Unidirectional Scheduling}
\label{sec:unidir}

In this subsection, which is the heart of the paper, we obtain
qualitatively improved link scheduling with oblivious power.

We shall utilize the \emph{mean} power assignment (or, square-root
assignment \cite{FKV09}) given by $\calM_v = \ell_v^{\alpha/2}$.
%
%
The affectance of link $\ell_w$ on link $\ell_v$ under $\calM$
is 
 \[ a_{w}(v) = \frac{\calM_{w}/d_{wv}^\alpha}{\calM_{v}/\ell_v^\alpha}
   = \left(\frac{\ell_w}{\ell_v}\right)^{\alpha/2} 
      \left(\frac{\ell_v}{d_{wv}}\right)^{\alpha}
   = \left(\frac{\sqrt{\ell_v \ell_w}}{d_{wv}}\right)^{\alpha} \ . \]
The following observation motivates the consideration of this power assignment.
\begin{observation}
Suppose $d_{wv} = d_{vw}$, for two links $\ell_v$, $\ell_w$.
Then, $a_{w}(v) = a_{v}(w)$ iff we use mean power assignment.
\end{observation}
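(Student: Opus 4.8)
The plan is to expand both affectances for a general power assignment, impose the hypothesis $d_{wv}=d_{vw}$, and read off the algebraic condition that equality forces. By definition, $a_w(v) = \frac{P_w}{P_v}\left(\frac{\ell_v}{d_{wv}}\right)^\alpha$ and $a_v(w) = \frac{P_v}{P_w}\left(\frac{\ell_w}{d_{vw}}\right)^\alpha$. Writing $d := d_{wv} = d_{vw}$ and cross-multiplying, the identity $a_w(v) = a_v(w)$ is equivalent to $P_w^2\,\ell_v^\alpha = P_v^2\,\ell_w^\alpha$, i.e. (taking positive roots) to $P_v/\ell_v^{\alpha/2} = P_w/\ell_w^{\alpha/2}$.

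For the ``if'' direction this is immediate: substituting the mean assignment $P_v = \calM_v = \ell_v^{\alpha/2}$ makes both sides of the last identity equal to $1$, so $a_w(v) = a_v(w)$ whenever $d_{wv}=d_{vw}$, with no further hypothesis on the placement of the four points. For the ``only if'' direction I would read the statement as being about oblivious assignments: if a length-based power function $P(\cdot)$ satisfies $a_w(v)=a_v(w)$ for \emph{every} pair of links with $d_{wv}=d_{vw}$, then the computation above yields $P(\ell_v)/\ell_v^{\alpha/2} = P(\ell_w)/\ell_w^{\alpha/2}$ for all such pairs. Since for any two prescribed lengths $\ell_1,\ell_2$ one can realize links of those lengths with equal cross-distances (place the two links symmetrically about a point, or on a suitable pair of lines — elementary in the Euclidean plane, and in the doubling metrics considered here), the ratio $P(\ell)/\ell^{\alpha/2}$ is forced to be constant; and since scaling all powers by a common constant leaves every affectance unchanged, $P$ is exactly the mean assignment.

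The only point needing care is the direction of the biconditional. A single symmetric pair only constrains the ratio $P_v/P_w$ and does not pin down the exponent $\alpha/2$; it is the quantification over the whole family of symmetric pairs (equivalently, the restriction to oblivious assignments with lengths ranging freely) that makes the ``only if'' true. I would phrase the observation with that understanding, so the mild geometric fact that symmetric configurations of any two lengths exist is the only thing to verify beyond the one-line algebraic identity above.
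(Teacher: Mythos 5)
The paper states this observation without proof (it follows immediately after the displayed formula $a_w(v) = (\sqrt{\ell_v\ell_w}/d_{wv})^\alpha$ for mean power, which makes the ``if'' direction evident), so there is no official argument to compare against; your computation $a_w(v)=a_v(w) \iff P_v/\ell_v^{\alpha/2} = P_w/\ell_w^{\alpha/2}$ is exactly the intended one and is correct. Your handling of the ``only if'' direction is also the right reading: as you note, a single symmetric pair only constrains the ratio $P_v/P_w$, so the biconditional is only meaningful for oblivious assignments quantified over all pairs of lengths realizable with $d_{vw}=d_{wv}$, which forces $P(\ell)/\ell^{\alpha/2}$ to be constant and hence $P=\calM$ up to the global scaling that leaves affectances invariant — a subtlety the paper glosses over but that your proof correctly identifies and resolves.
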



Let $\tau = 2\beta n$ and $\Lambda = 2 \tau^{2/\alpha}$.
\label{def:tau} 
\label{def:lambda} 
We say that a link $\ell_v$ and $\ell_w$ are \emph{$t$-close} under mean
power assignment if, $\max(a_{v}(w),a_{w}(v)) \ge t$.
\label{def:tclose} 


The key observation that we make is that each link affects (or is
affected by) few links that are of widely different length.
We can then treat those affectance relationships in a graph-theoretic manner.
This central observation holds independent of metric.
Recall that we assumed that $\beta \ge 3^\alpha$, and thus it follows
from Lemma \ref{lem:ind-separation} that any slot in an optimal solution is a
3-independent linkset.

\begin{lemma}
Let $Q$ be a 3-independent set of links in an arbitrary
metric space, and let $\ell_v$ be a link that is shorter than the
links in $Q$ by a factor of at least $\Lambda$.
Suppose all the links in $Q$ are $\frac{1}{\tau}$-close to $\ell_v$
under mean power assignment.  Then, $|Q| = O(\log\log \Delta)$.
\label{lem:length-sep}
\end{lemma}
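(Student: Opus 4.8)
The plan is to partition $Q$ by the direction of the dominant interference with $\ell_v$, and to show that within each part the $3$-independence condition forces link lengths to grow \emph{doubly} exponentially, so that each part contains only $O(\log\log\Delta)$ links. Recall that under $\calM$ one has $a_w(v)=(\sqrt{\ell_v\ell_w}/d_{wv})^\alpha$ and $a_v(w)=(\sqrt{\ell_v\ell_w}/d_{vw})^\alpha$, with identical numerators. Since each $\ell_w\in Q$ is $\tfrac1\tau$-close to $\ell_v$, at least one of these is $\ge 1/\tau$; put $\ell_w$ into $Q_A$ if $a_v(w)\ge 1/\tau$ and into $Q_B$ otherwise, so that $Q=Q_A\cup Q_B$ and it suffices to bound each part. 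Writing $\delta_w:=\tau^{1/\alpha}\sqrt{\ell_v\ell_w}$, membership in $Q_A$ gives $d(s_v,r_w)=d_{vw}\le\delta_w$, and membership in $Q_B$ gives $d(s_w,r_v)=d_{wv}\le\delta_w$; in either case one endpoint of $\ell_w$ — its \emph{near endpoint} — lies within $\delta_w$ of a common \emph{anchor} ($s_v$ for all of $Q_A$, $r_v$ for all of $Q_B$). Because $\ell_w\ge\Lambda\ell_v=2\tau^{2/\alpha}\ell_v$, we get the load-bearing inequality $\delta_w\le \ell_w/\sqrt2$.

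The heart of the argument is a geometric claim: if $\ell_w,\ell_x$ lie in the same part with $\ell_w<\ell_x$, then $\ell_x\ge \ell_w^2/(\tau^{2/\alpha}\ell_v)$. To prove it I would chain the triangle inequality through the common anchor of that part; in both cases this yields $d_{wx}\,d_{xw}\le(\ell_w+\delta_w+\delta_x)(\ell_x+\delta_w+\delta_x)$ — for instance in the $Q_A$ case, $d_{wx}=d(s_w,r_x)\le d(s_w,r_w)+d(r_w,s_v)+d(s_v,r_x)\le \ell_w+\delta_w+\delta_x$, and symmetrically $d_{xw}\le\ell_x+\delta_x+\delta_w$. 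Bounding $\delta_w,\delta_x\le\ell_x/\sqrt2$ in the second factor gives $\ell_x+\delta_w+\delta_x\le(1+\sqrt2)\ell_x$, while in the first factor I would keep $\delta_x$ explicit; combining with the $3$-independence bound $d_{wx}d_{xw}\ge 9\ell_w\ell_x$ (Lemma~\ref{lem:ind-separation}, valid since $\beta\ge 3^\alpha$), dividing by $\ell_x$, and using $\delta_w\le\ell_w/\sqrt2$, one reaches
\[ 9\ell_w \;\le\; (1+\sqrt2)\bigl(\ell_w(1+\tfrac1{\sqrt2})+\delta_x\bigr)\ . \]
This forces $\delta_x=\Omega(\ell_w)$, i.e. $\tau^{2/\alpha}\ell_v\ell_x=\delta_x^2=\Omega(\ell_w^2)$, which is the claimed quadratic growth (the numerical constants are comfortably in our favour).

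Finally, order each part by increasing length $\ell_{(1)}<\ell_{(2)}<\cdots$ and set $m_i=\ell_{(i)}/(\tau^{2/\alpha}\ell_v)$. The quadratic claim applied to consecutive links gives $m_{i+1}\ge m_i^{\,2}$, while $m_1\ge \Lambda/\tau^{2/\alpha}=2$, so $m_i\ge 2^{2^{i-1}}$; hence the longest link in a part of size $k$ has length at least $2^{2^{k-1}}\ell_v$. Since every link length is at most $\Delta\,\ell_v$ (as $\ell_v\ge\ell_{\min}$), this forces $2^{2^{k-1}}\le\Delta$, i.e. $k\le 1+\lg\lg\Delta$, and therefore $|Q|=|Q_A|+|Q_B|=O(\log\log\Delta)$.

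The triangle-inequality estimates themselves are routine; the real obstacle — and the reason the bound only works for mean power — is recognizing that $Q$ must be split according to interference direction, and that $3$-independence then yields $\ell_x=\Omega(\ell_w^2/\cdot)$ rather than merely $\ell_x=\Omega(\ell_w)$. The square is exactly what converts an $O(\log\Delta)$ bound into an $O(\log\log\Delta)$ one, and it arises because under $\calM$ the near endpoint of a strongly-interfering link is only $\sqrt{\ell_v\ell_w}$-close to $\ell_v$ (not $\ell_v$-close), so that the short leg of the relevant distance scales like $\sqrt{\ell_x}$ rather than like $\ell_x$.
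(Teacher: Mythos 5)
Your proof is correct and follows essentially the same route as the paper's: split $Q$ by interference direction, chain the triangle inequality through the shared endpoint of $\ell_v$, and combine with $3$-independence to force the quadratic length-growth $\ell_x\gtrsim\ell_w^2/(\tau^{2/\alpha}\ell_v)$, hence doubly-exponential growth and the $O(\log\log\Delta)$ bound. The "common anchor" framing that treats both parts uniformly is a nice presentational touch, but the substance is identical to the paper's argument.
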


\begin{proof}
The set $Q$ consists of two types of links: those that affect $\ell_v$ by at
least $\frac{1}{\tau}$ under mean power, and those that are affected by $\ell_v$ by that amount.
We shall consider the former type; the argument is nearly identical
for the latter type, and will be omitted.

Consider a pair $\ell_w, \ell_{w'}$ in $Q$ that affect $\ell_v$ by at
least $1/\tau$, 
and suppose without loss of generality that $\ell_w \ge \ell_{w'}$.
The affectance of $\ell_w$ on $\ell_v$ implies that
$\sqrt{\ell_v \ell_w}^\alpha \ge d_{wv}^\alpha \cdot 1/\tau$, or 
\[ d_{wv} \le \sqrt{\ell_v \ell_w} \tau^{1/\alpha} 
         = \sqrt{\ell_v \ell_w \Lambda/2} \ . \]
Similarly, $d_{w'v} \le \sqrt{\ell_v \ell_{w'}\Lambda/2}$.
By the triangular inequality we have that 
\[ d_{w'w} \le d(s_{w'}, r_v) + d(r_v,s_w) + d(s_w, r_w) 
  \le \ell_{w} + \sqrt{2\Lambda \ell_v \ell_w} < 3\ell_w,  \]
using that $\sqrt{\ell_w} \ge \sqrt{\Lambda} \sqrt{\ell_v}$.
Similarly, 
\[ d_{ww'} \le d_{wv} + d_{w'v} + \ell_{w'} 
  \le \ell_{w'} + \sqrt{2 \Lambda \ell_v \ell_{w}} \ . \]
Multiplying together, we obtain that
\[ d_{w'w} \cdot d_{w w'} \le 3 \ell_{w'} \ell_w + 3 \sqrt{2\Lambda
     \ell_v \ell_{w}} \cdot \ell_w\ . \]
By 3-independence, 
$d_{w'w} \cdot d_{w w'} \ge 9 \ell_w \ell_{w'}$. 
By combining the last two inequalities and cancelling a $6 \ell_w$
factor, we have that 
$\ell_{w'} \le \sqrt{\Lambda \ell_v \ell_w/2}$, or
\begin{equation}
\ell_w \ge \frac{2 \ell_{w'}^2}{\Lambda \ell_v}\ .
\label{eq:two-length-rel}
\end{equation}

Label the links in $Q$ by $\ell_1, \ell_2, \ldots, \ell_t$ in 
increasing order of length.
Equation (\ref{eq:two-length-rel}) implies that 
\begin{equation}
\frac{\ell_{i+1}}{\ell_i} \ge \frac{2 \ell_i}{\ell_v \Lambda} \ge
2 \frac{\ell_i}{\ell_1},
\label{eq:lambda-ratios}
\end{equation}
for any $i=2, 3, \ldots, t$.
Thus, if we let $\lambda_i = \ell_i/\ell_1$, we get from 
(\ref{eq:lambda-ratios}) that $\lambda_{i+1} \ge 2 \lambda_{i}^2$, and by induction that 
$\lambda_{t} \ge 2^{2^{t-1}-1}$. Hence, $|Q| = t \le \lg\lg \lambda_t
+2 \le \lg\lg \Delta + 2$, and the lemma follows.
\end{proof}

We say that a set $S$ of links is \emph{well-separated} if any pair of
links differ in length by a factor that is either less than 2 or greater than
$\Lambda$, and that
a link $\ell_w$ is \emph{length-separated} from link $\ell_v$ if
$\ell_w > \Lambda \ell_v$.
\label{def:wellsep} 

We now proceed as follows.
We partition a given linkset $L$ into classes $L_1, L_2, \ldots,
L_{M}$, where $M = \lceil \lg 2 \Lambda \rceil$, such that 
$L_i = \{\ell_v : \exists k, \lceil \lg \ell_v \rceil = i + k M \}$. Namely,
each $L_i$ is a well-separated set. 
We shall solve the problems independently on the classes $L_i$ and
combine the subsolutions in the obvious way.

Let $S$ be a well-separated linkset and $d$ be the minimum link length
in $S$. Let $z = z_1(2^{1+\alpha/2}\beta)$.
Define the graph $H(S)$ on $S$ where two links $\ell_v$ and $\ell_w$
are adjacent if they are either:
a) nearly-equilength and the distance between their senders is at
most $z d$, or
b) length-separated and $1/\tau$-close.
We show the scheduling and capacity problems are captured well
as coloring and independent set problems on the graph $H$.

\begin{lemma}
Let $S$ be a well-separated linkset in a fading metric.
Then, any subset of $S$ that is 
independent in $H(S)$ is SINR-feasible using mean power.
\label{lem:is-is-feasible}
\end{lemma}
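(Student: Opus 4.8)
The plan is to fix a link $\ell_v$ in an $H(S)$-independent set $X\subseteq S$ and bound $a_X(\ell_v)$ directly under the mean power assignment $\calM$; by additivity of affectance, showing $a_X(\ell_v)<1/\beta$ for every $\ell_v\in X$ is exactly what SINR-feasibility of $X$ requires. Since $S$ is well-separated, each other link $\ell_w\in X$ falls into exactly one of three categories relative to $\ell_v$: (A) $\ell_w$ is nearly-equilength with $\ell_v$; (B) $\ell_w$ is length-separated from $\ell_v$, so $\ell_w>\Lambda\ell_v$; or (C) $\ell_v$ is length-separated from $\ell_w$, so $\ell_v>\Lambda\ell_w$. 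Write $A,B,C$ for the corresponding subsets of $X\setminus\{\ell_v\}$; they partition it, so $|A|+|B|+|C|\le n-1$. I will bound $a_A(\ell_v)$, $a_B(\ell_v)$, $a_C(\ell_v)$ separately, allotting $1/(2\beta)$ of the feasibility budget to $A$ and the remaining $1/(2\beta)$ to $B\cup C$.

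The length-separated part (B and C) is a pure counting argument. If $\ell_w\in B\cup C$, then $\ell_w$ and $\ell_v$ are length-separated, so their (non-)adjacency in $H(S)$ is governed by clause (b): being non-adjacent, they are not $1/\tau$-close, i.e.\ $\max(a_v(w),a_w(v))<1/\tau$; in particular $a_w(v)<1/\tau$. Summing over the at most $n-1$ such links and recalling $\tau=2\beta n$ gives $a_B(\ell_v)+a_C(\ell_v)<(n-1)/\tau<1/(2\beta)$. This is exactly the point of choosing $\tau$ proportional to $\beta n$: each of the (up to $n$) widely-differing-length links is allowed an affectance of order $1/(2\beta n)$, and together they still fall short of half the threshold.

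The nearly-equilength part is where the geometry enters, through the Far-away Lemma (Lemma~\ref{lem:faraway-aff}). The links in $A\cup\{\ell_v\}$ are pairwise nearly-equilength and pairwise non-adjacent in $H(S)$, so by clause (a) their senders are pairwise separated by at least $z$ times the minimum link length of this class, with $z=z_1(2^{1+\alpha/2}\beta)$; since the longest link in the class has length less than twice that minimum, this separation is at least $zD/2$ where $D$ is the class's maximum length. The Far-away Lemma then certifies that $A\cup\{\ell_v\}$ is a $(2^{1+\alpha/2}\beta)$-signal set under \emph{uniform} power, so the uniform-power affectance of $A$ on $\ell_v$ is at most $1/(2^{1+\alpha/2}\beta)$. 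Converting to mean power, since $A$ consists of links nearly-equilength with $\ell_v$, every ratio $\calM_w/\calM_v=(\ell_w/\ell_v)^{\alpha/2}$ is below $2^{\alpha/2}$, so the mean-power affectance is at most $2^{\alpha/2}$ times the uniform one, i.e.\ at most $2^{\alpha/2}/(2^{1+\alpha/2}\beta)=1/(2\beta)$. Adding the three bounds gives $a_X(\ell_v)<1/(2\beta)+1/(2\beta)=1/\beta$, as needed.

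The routine parts (the counting bound, additivity, the uniform-to-mean conversion) are each short; the genuinely delicate point is the interplay of constants — choosing $p=2^{1+\alpha/2}\beta$ in the Far-away Lemma so that the factor-$2^{\alpha/2}$ loss from replacing uniform by mean power consumes precisely half the budget while $\tau=2\beta n$ absorbs the other half — together with invoking the right clause of $H(S)$ per category (clause (a) for $A$, clause (b) for $B$ and $C$) and reading the clause-(a) separation relative to the length of the nearly-equilength class rather than the global minimum link length. I expect the only step needing real care to be this bookkeeping (in particular verifying that $D/2$ does not exceed the class minimum, so that the hypothesis of Lemma~\ref{lem:faraway-aff} is met); everything else follows mechanically from Lemma~\ref{lem:faraway-aff} and the definition of $1/\tau$-closeness.
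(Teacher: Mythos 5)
Your proof is correct and follows essentially the same route as the paper: the paper also splits the affectance on $\ell_v$ into the nearly-equilength part (bounded via Thm.~\ref{thm:nearequilengths-and-udg}, i.e.\ the Far-away Lemma, as a $2^{1+\alpha/2}\beta$-signal set under uniform power, then losing a factor $2^{\alpha/2}$ when switching to mean power) and the length-separated part (bounded by $n\cdot 1/\tau = 1/(2\beta)$ via the non-$1/\tau$-closeness of non-neighbors). Your three-way split of the length-separated links into $B$ and $C$, and your explicit remark that the clause-(a) separation must be read relative to the length class, are only cosmetic refinements of the paper's two-part decomposition.
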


\begin{proof}
Let $X$ be a subset of $S$ that is independent in $H(S)$.
Consider a link $\ell_v$ in $X$. 
Let $S_v$ be the set of links in $S$ that are nearly-equilength to
$\ell_v$ (including $\ell_v$), $X_v = X \cap S_v$ and 
$\hat{X} = X \setminus X_v$. 
We bound the affectance on $\ell_v$ separately for $X_v$ and $\hat{X}$.
None of the links in $\hat{X}$ are $1/\tau$-close to $\ell_v$, so each
affects $\ell_v$ by at most $1/\tau$, for a total of
$a_{\hat{X}}(\ell_v) \le n\cdot 1/\tau = 1/(2\beta)$. 
By definition, $X_{v}$ is independent in $U_{z}(S_v)$, where $z = z_1(2^{1+\alpha/2}\beta)$,
and so by Thm~\ref{thm:nearequilengths-and-udg} it is a
$2^{1+\alpha/2}\beta$-signal set under uniform power.
Changing to mean power introduces a variance of at most $2^{\alpha/2}$
in the transmission powers, since the variance in length is at most 2.
Thus, $X_v$ is a $2\beta$-signal set under mean power.
Hence, under mean power, $a_{X_{v}}(\ell_v) \le 1/(2\beta)$ and
$a_{X}(\ell_v) = a_{X_v}(\ell_v) + a_{\hat{X}}(\ell_v) \le 1/\beta$.
\end{proof}

The graph $H(S)$ of a well-separated linkset $S$ has good
inductiveness properties.
Denote the case of $k$-$\pi$-inductiveness when $\pi$ refers to SINR
feasibility as \emph{$k$-SINR-inductive}.

\begin{lemma}
Let $S$ be a well-separated linkset in a fading metric.
Then, $H(S)$ is $O(\log\log \Delta)$-SINR-inductive.
The inductive ordering is that of non-decreasing link length.
\label{lem:clawlike-freeness}
\end{lemma}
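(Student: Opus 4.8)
The plan is to fix the natural vertex ordering --- links in nondecreasing order of length, breaking ties arbitrarily --- and bound, for each link $\ell_v$, the maximum size of an SINR-feasible subset of $N_{H(S)}[v]$ restricted to links that come at or after $\ell_v$ in the ordering. Call this set $W_v$; since $S$ is well-separated, every link in $W_v \setminus \{\ell_v\}$ is either nearly-equilength to $\ell_v$ or length-separated from $\ell_v$ (i.e. longer by a factor $\ge \Lambda$). I would split $W_v$ accordingly into $W_v^{=}$ (the nearly-equilength part, including $\ell_v$) and $W_v^{>}$ (the length-separated part), bound the size of a feasible subset of each, and add the two bounds.

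For the length-separated part $W_v^{>}$: by definition of $H(S)$, every link in $W_v^{>}$ is adjacent to $\ell_v$ in $H(S)$ via clause (b), so each is $1/\tau$-close to $\ell_v$ and longer than $\ell_v$ by a factor at least $\Lambda$. Any SINR-feasible subset $Q \subseteq W_v^{>}$ is a $1/\beta$-signal set, hence (since $\beta \ge 3^\alpha$) a $3$-independent set by Lemma~\ref{lem:ind-separation}. So $Q$ is a $3$-independent set of links, all longer than $\ell_v$ by a factor $\ge \Lambda$, all $1/\tau$-close to $\ell_v$ --- exactly the hypothesis of Lemma~\ref{lem:length-sep}, which gives $|Q| = O(\log\log \Delta)$.

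For the nearly-equilength part $W_v^{=}$: all these links lie in one length class, and any link in $W_v^{=} \setminus \{\ell_v\}$ is adjacent to $\ell_v$ in $H(S)$ via clause (a), so its sender is within distance $zd'$ of $s_v$, where $d'$ is the minimum length in that nearly-equilength class and $z = z_1(2^{1+\alpha/2}\beta)$. An SINR-feasible subset $Q' \subseteq W_v^{=}$ is, by Lemma~\ref{lem:ind-separation}, $3$-independent, hence (Lemma~\ref{lem:graph-rels}) an independent set in $U_{2}(W_v^{=})$ with respect to that class's minimum length; equivalently its senders form a $z_2(3^\alpha)d'$-packing for an appropriate constant. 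Since every sender of $W_v^{=}$ sits in a ball of radius $O(zd')$ around $s_v$ while the senders of $Q'$ are centers of disjoint balls of radius $\Omega(d')$, the doubling/packing bound (as in Observation~\ref{obs:clawfree-ness}) caps $|Q'|$ by a constant $C(1 + O(z))^A = O(1)$. Combining, $\Pi(W_v) \le |Q| + |Q'| = O(\log\log\Delta) + O(1) = O(\log\log\Delta)$, which is the claimed inductiveness bound. (Compatibility of $H(S)$ with the SINR-feasibility property is Lemma~\ref{lem:is-is-feasible}, so the $k$-$\pi$-inductive framework applies.)

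The main obstacle is the clean separation of $N_{H(S)}[v]$ into the two regimes and making sure the well-separatedness hypothesis is used correctly: one must check that clauses (a) and (b) of the definition of $H(S)$ are exhaustive for neighbors of $\ell_v$ among links at least as long as $\ell_v$ --- this is exactly where well-separatedness (gap factor either $< 2$ or $> \Lambda$, with nothing in between) is essential --- and then verify that the hypotheses of Lemma~\ref{lem:length-sep} are met verbatim (3-independence of the feasible subset, the $\Lambda$ length gap, and $1/\tau$-closeness to the common shorter link $\ell_v$). The nearly-equilength bound is routine given Observation~\ref{obs:clawfree-ness}; the length-separated bound is a direct invocation of Lemma~\ref{lem:length-sep}. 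No new estimates are needed beyond bookkeeping.
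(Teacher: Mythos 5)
Your proof is correct and follows essentially the same route as the paper's: you split the neighbors of $\ell_v$ among the not-shorter links into the nearly-equilength part, bounded by $O(1)$ via the unit-disc packing argument (Theorem~\ref{thm:nearequilengths-and-udg} plus Observation~\ref{obs:clawfree-ness}), and the length-separated part, bounded by $O(\log\log\Delta)$ by applying Lemma~\ref{lem:length-sep} to the $3$-independent (via Lemma~\ref{lem:ind-separation}) feasible subset. The only cosmetic difference is that the paper phrases the argument by peeling off the shortest link and recursing, whereas you verify the inductiveness condition directly at each vertex of the length-sorted order.
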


\begin{proof}
Let $\ell_v$ be the shortest link in $S$. 
Let $X$ be an SINR-feasible subset of $N_H[\ell_v]$, the closed
neighborhood of $\ell_v$ in $H(S)$.
We shall show that $|X| = O(\log \log \Delta)$. We can then order
$\ell_v$ first and apply the claim inductively on $S
\setminus \{\ell_v\}$ to obtain the remainder of the inductive order,
yielding the lemma.  

Let $S_v$ be the subset of nearly-equilength links in $S$ of length at
most double that of $\ell_v$.
The nearly-equilength feasible ($\beta$-signal) linkset $X_v = X \cap S_v$ 
is an independent set in $U_{z_2(\beta)}(S_v)$, by
Thm.~\ref{thm:nearequilengths-and-udg}. 
Note that $X_v$ is contained in the closed-neighborhood of $\ell_v$ in $U_{z_1(2\beta)}(S_v)$.
Then, by Obs.~\ref{obs:clawfree-ness}, 
\[ |X_v| \le \alpha(U_{z_2(\beta)}[X]) \le C (1 + 2
z_1(2\beta)/z_2(\beta))^A = O(1)\ . \]

The other neighbors of $\ell_v$, those in $X \setminus X_v$, are
length-separated from $\ell_v$. By Lemma \ref{lem:length-sep},
$\ell_v$ has at most $O(\log\log \Delta)$ length-separated neighbors
in $X$.  Hence, $|X| = O(\log\log \Delta)+ O(1) = O(\log\log \Delta)$.
\end{proof}

We now apply Prop.~\ref{prop:approx-pi-probs} on each of the
$O(\log n)$ classes $L_i$ separately to obtain our main result.

\begin{theorem}
{\sched}, {\ssched}, and {\wcapacity} are 
$O(\log\log \Delta \cdot \log n)$-approximable in fading metrics. 
\end{theorem}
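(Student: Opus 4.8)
The plan is to assemble the theorem from the machinery already built in Section~\ref{sec:oblivious}, essentially by combining the partition into well-separated classes with the inductiveness results and the generic approximation of Proposition~\ref{prop:approx-pi-probs}. First I would recall the partition of the input linkset $L$ into $M = \lceil \lg 2\Lambda \rceil$ classes $L_1, \ldots, L_M$, each of which is well-separated. Since $\Lambda = 2\tau^{2/\alpha}$ and $\tau = 2\beta n$, we have $M = O(\log \tau^{1/\alpha}) = O(\log n)$ (treating $\alpha$ and $\beta$ as constants), so there are $O(\log n)$ classes.

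Next I would apply the structural results to a single class $S = L_i$. By Lemma~\ref{lem:is-is-feasible}, any independent set of the graph $H(S)$ is SINR-feasible under mean power, so $H(S)$ is compatible with the SINR-feasibility property $\pi$. By Lemma~\ref{lem:clawlike-freeness}, $H(S)$ is $O(\log\log\Delta)$-SINR-inductive, with the inductive ordering given by non-decreasing link length. Since SINR-feasibility is hereditary, polynomial-time checkable (by solving a linear system, as noted after Definition~\ref{def:inductive}), and the inductive ordering is explicitly computable, the hypotheses of Proposition~\ref{prop:approx-pi-probs} are met with $k = O(\log\log\Delta)$. That proposition then yields $O(\log\log\Delta)$-approximation algorithms for Weighted Maximum $\pi$-Subset (which is exactly {\ssched} restricted to $S$, or {\wcapacity} when links carry weights) and for Minimum Partition into $\pi$-Subsets (which is exactly {\sched} restricted to $S$).

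Finally I would combine the per-class solutions. For {\sched}, schedule each $L_i$ independently and concatenate the slot sequences: the number of slots used is $\sum_i O(\log\log\Delta) \cdot OPT(L_i) \le O(\log\log\Delta) \cdot M \cdot OPT(L) = O(\log\log\Delta \cdot \log n) \cdot OPT(L)$, since $OPT(L_i) \le OPT(L)$ for each $i$. For {\ssched} and {\wcapacity}, output the best of the $M$ per-class solutions; since $\sum_i OPT(L_i) \ge OPT(L)$, some class has $OPT(L_i) \ge OPT(L)/M$, and our $O(\log\log\Delta)$-approximate solution on that class has value at least $OPT(L)/(M \cdot O(\log\log\Delta)) = OPT(L)/O(\log\log\Delta \cdot \log n)$, giving the claimed ratio.

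I do not anticipate a genuine obstacle here, as all the heavy lifting is done in Lemmas~\ref{lem:is-is-feasible} and~\ref{lem:clawlike-freeness} and Proposition~\ref{prop:approx-pi-probs}; the only points needing a line of care are (i) verifying $M = O(\log n)$ from the definitions of $\tau$ and $\Lambda$, and (ii) the standard observation that partitioning into $M$ classes costs a factor $M$ in the scheduling (covering) direction and a factor $M$ in the capacity (packing) direction. I would state these explicitly but not belabor them.
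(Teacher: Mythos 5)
Your proposal is correct and follows exactly the paper's route: the paper's entire proof is the single remark that Proposition~\ref{prop:approx-pi-probs} is applied to each of the $O(\log n)$ well-separated classes $L_i$, using Lemmas~\ref{lem:is-is-feasible} and~\ref{lem:clawlike-freeness} to supply compatibility and $O(\log\log\Delta)$-SINR-inductiveness. You have merely (and correctly) made explicit the standard bookkeeping for combining per-class solutions, which the paper leaves implicit.
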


Finally, we obtain as corollary, a relationship between schedule length and the
chromatic number of a certain graph on the links.
Let $G'(L)$ be the graph on the linkset $L$ formed by the complete
union of the graphs $H(L_i)$, for $i=1, 2, \ldots$. Namely, links in
different length classes are adjacent in $G'$, while links in the same
length class $L_i$ induce the subgraph $H(L_i)$.

\begin{corollary}
There is an algorithm that outputs a feasible scheduling using
$O(\log\log \Delta \cdot \log n) \cdot \chi(G'(L))$ slots in fading metrics.
\end{corollary}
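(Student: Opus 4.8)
The plan is to derive the corollary as a more or less immediate consequence of the machinery already assembled: Proposition~\ref{prop:approx-pi-probs} on $k$-$\pi$-inductive instances, Lemma~\ref{lem:is-is-feasible} (independence in $H(S)$ implies SINR-feasibility under mean power), Lemma~\ref{lem:clawlike-freeness} ($H(S)$ is $O(\log\log\Delta)$-SINR-inductive via the non-decreasing-length ordering), and the partition of an arbitrary linkset $L$ into $M = \lceil \lg 2\Lambda\rceil = O(\log n)$ well-separated classes $L_1,\dots,L_M$. What remains is to connect the output of the partitioning/coloring algorithm to the quantity $\chi(G'(L))$ rather than to $OPT$.

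First I would recall the algorithm: run the Minimum Partition into $\pi$-Subsets routine of Proposition~\ref{prop:approx-pi-probs} on each $H(L_i)$ using the non-decreasing-length ordering, which by Lemma~\ref{lem:clawlike-freeness} is $k_i$-SINR-inductive for some $k_i = O(\log\log\Delta)$; each class is colored (partitioned) into $\pi$-subsets, and by Lemma~\ref{lem:is-is-feasible} each such subset is SINR-feasible under mean power. Concatenating the schedules of all $M$ classes gives a feasible schedule of $L$. The number of slots used on $L_i$ is, by the proof of Proposition~\ref{prop:approx-pi-probs}, at most $|N_i|+1$ where $N_i$ is a forward-neighborhood in $H(L_i)$ of some vertex; the key point is that $|N_i|+1 \le D(H(L_i)) + 1 \le \chi(H(L_i)) \cdot (\text{something})$ — more cleanly, the partitioning algorithm is actually a \emph{coloring}-style greedy on the graph $H(L_i)$, so it uses at most $D(H(L_i))+1$ colors, and since $H(L_i)$ is the induced subgraph $G'(L)[L_i]$ we have $\chi(H(L_i)) \le \chi(G'(L))$. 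I must be slightly careful here: the bound from Proposition~\ref{prop:approx-pi-probs} is stated as a $k$-approximation of the \emph{minimum partition into $\pi$-subsets}, not as a bound against $\chi$; so the cleanest route is to observe that the same greedy, viewed purely as a graph coloring of $H(L_i)$, uses at most $|N_i|+1 \le \chi(G'(L))$ colors \emph{times} the inductiveness factor — actually the inductiveness factor enters only when comparing to the true optimum partition, whereas comparing to a proper coloring of $H(L_i)$ itself the greedy uses at most $D(H(L_i))+1$ colors. Summing over the $M = O(\log n)$ classes, and noting $D(H(L_i))+1 = O(\log\log\Delta)\cdot\chi(H(L_i))$ would require the graph-coloring-on-inductive-graphs bound of \cite{YeBorodin}, which gives that the reverse-inductive greedy uses $O(k)\cdot\chi$ colors. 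Invoking that, each $L_i$ is scheduled in $O(\log\log\Delta)\cdot\chi(H(L_i)) \le O(\log\log\Delta)\cdot\chi(G'(L))$ slots, and the total over all classes is $O(\log n)\cdot O(\log\log\Delta)\cdot\chi(G'(L)) = O(\log\log\Delta\cdot\log n)\cdot\chi(G'(L))$.

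The main obstacle, and the step I would spell out most carefully, is the second one: making precise that the partition algorithm of Proposition~\ref{prop:approx-pi-probs}, when run on $H(L_i)$, uses $O(k_i)\cdot\chi(H(L_i))$ slots and not merely $k_i\cdot OPT_{\text{partition}}$. This needs the fact (implicit in the graph-coloring application of sequentially $k$-independent graphs in \cite{YeBorodin}, and extended to $k$-$\pi$-inductive graphs in Proposition~\ref{prop:approx-pi-probs}'s proof) that for a $k$-$\pi$-inductive graph the reverse-order greedy coloring uses at most $k\cdot\chi(G)+1$, or more simply $D'+1$ colors where $D'$ is the maximum forward-degree, which is $\le k\cdot\alpha(G[N_i])\cdot$ — here one uses that the forward neighborhood $N_i$ of the last-colored vertex has $\Pi(N[v_i]\cap\{v_i,\dots\}) \le k$, and since $H(L_i)$ is compatible with $\pi$, any independent set in $H(L_i)[N_i]$ is a $\pi$-set of size $\le k$, so $\chi(H(L_i)[N_i]) \ge |N_i|/k$, giving $|N_i|+1 \le k\cdot\chi(H(L_i)) + 1 \le k\cdot\chi(G'(L))+1$. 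Finally, I would note that the relationship is one-directional — the corollary bounds the algorithm's output in terms of $\chi(G'(L))$, and does not claim $\chi(G'(L))$ is itself within a polylog factor of $OPT$, which is precisely why it is stated as a separate corollary rather than folded into the theorem.
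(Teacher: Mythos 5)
Your proposal is correct and follows essentially the argument the paper intends: partition into $M=O(\log n)$ well-separated classes, run the reverse-inductive greedy partition of Proposition~\ref{prop:approx-pi-probs} on each $H(L_i)$, and bound the $|N_i|+1$ classes it uses by $k\cdot\chi(H(L_i))\le k\cdot\chi(G'(L))$ with $k=O(\log\log\Delta)$, using that compatibility of $H(L_i)$ with SINR-feasibility forces $\alpha\bigl(H(L_i)[N_i\cup\{v_i\}]\bigr)\le k$ (equivalently, that any proper coloring of $H(L_i)$ is itself a partition into $\pi$-sets, so the optimum partition is at most $\chi(H(L_i))$). The one step you rightly flag and fill in --- converting the $k$-approximation guarantee against the optimal $\pi$-partition into a bound against $\chi$ --- is exactly the point the paper leaves implicit, and your handling of it is sound.
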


\subsection{Bidirectional Scheduling}
\label{sec:bidi}

In the bidirectional variant introduced by Fangh\"anel et al \cite{FKRV09},
a stronger separation criteria applies, since communication along each
link can occur in either direction. The asymmetry between sender and
receiver disappears and thus studying this model is useful in
order to explore the cost of assymetry.

The distance between two links is now the shortest distance between
any endpoints of the links. Thus, $d_{uv} = d_{vu} = \min(d(r_v,r_u),
d(r_v, s_u), d(s_v,s_u), d(s_v,r_u))$. Other definitions are unchanged.

We can obtain a better approximation ratio for this problem, with
essentially the same algorithm, via the following stronger version
of Lemma \ref{lem:length-sep}.

\begin{lemma}
Let $S$ be a set of $2$-independent links in a
bidirectional fading metric
and let $\ell_v$ be a link.
Then, there is at most one link $\ell_w$ in $S$ with $\ell_w >
\tau^{2/\alpha} \cdot \ell_v$ that is $1/\tau$-close under mean
power assignment.
\label{lem:single-long-aff}
\end{lemma}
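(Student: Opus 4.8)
The plan is to run the two–link argument behind Lemma~\ref{lem:length-sep}, but in the bidirectional model, where it becomes qualitatively stronger because the distance $d_{ww'}=d_{w'w}$ between two links is the distance between their \emph{closest} endpoints rather than sender-to-receiver. So suppose, toward a contradiction, that there are two distinct links $\ell_w,\ell_{w'}\in S$ with $\ell_w\ge\ell_{w'}$, both longer than $\tau^{2/\alpha}\ell_v$ and both $1/\tau$-close to $\ell_v$ under mean power. Since the mean-power affectance between any two links equals $(\sqrt{\ell_v\ell_w}/d_{wv})^\alpha$ and $d_{wv}=d_{vw}$ here, closeness translates into $d_{wv}\le\sqrt{\ell_v\ell_w}\,\tau^{1/\alpha}$ and $d_{w'v}\le\sqrt{\ell_v\ell_{w'}}\,\tau^{1/\alpha}$.

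Next I would bound $d_{ww'}$ by routing from an endpoint of $\ell_w$ attaining $d_{wv}$, through an endpoint of $\ell_v$, to an endpoint of $\ell_{w'}$ attaining $d_{w'v}$; the triangle inequality gives $d_{ww'}\le d_{wv}+\ell_v+d_{w'v}$, and the $\ell_v$ term is even absent when both links are closest to a common endpoint of $\ell_v$. This is exactly where the bidirectional setting pays off: unlike in Lemma~\ref{lem:length-sep}, no full link length need be traversed. The separation hypothesis $\ell_{w'}>\tau^{2/\alpha}\ell_v$ yields $\sqrt{\ell_v}\,\tau^{1/\alpha}<\sqrt{\ell_{w'}}\le\sqrt{\ell_w}$, hence $d_{wv}<\sqrt{\ell_w\ell_{w'}}$, $d_{w'v}<\ell_{w'}\le\sqrt{\ell_w\ell_{w'}}$, and $\ell_v<\tau^{-2/\alpha}\ell_{w'}$ --- which is negligible next to $\sqrt{\ell_w\ell_{w'}}$ since $\tau^{2/\alpha}\ge(2\cdot3^\alpha)^{2/\alpha}>9$. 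Plugging these in makes $d_{ww'}$ bounded by a factor barely above $2$ times $\sqrt{\ell_w\ell_{w'}}$, whereas $2$-independence of $S$ forces $d_{ww'}=d_{w'w}\ge 2\sqrt{\ell_w\ell_{w'}}$; closing the remaining gap in the constants yields the contradiction, so there can be at most one such $\ell_w$.

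I expect the only real obstacle to be this final constant accounting, in the single awkward configuration where $\ell_w\approx\ell_{w'}$ and the two links approach $\ell_v$ from its two opposite endpoints, so that the additive $\ell_v$ in the triangle bound is genuinely present and the crude estimate only gives $d_{ww'}<(2+\tau^{-2/\alpha})\sqrt{\ell_w\ell_{w'}}$. I would handle it by splitting into the ``common endpoint'' case (which already gives $d_{ww'}<2\sqrt{\ell_w\ell_{w'}}$ and hence an immediate contradiction) and the residual ``opposite endpoint'' case, where I would exploit the extra slack available --- either the fact that feasible slots are in fact $3$-independent (from $\beta\ge3^\alpha$ via Lemma~\ref{lem:ind-separation}), which turns the bound $d_{ww'}<3\sqrt{\ell_w\ell_{w'}}$ into a contradiction outright, or the packing property of the doubling metric to rule out two such near-endpoints sitting at distance $\ge 2\sqrt{\ell_w\ell_{w'}}$ inside a ball of radius barely $\sqrt{\ell_w\ell_{w'}}$ around $\ell_v$. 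Once the triangle estimate is pushed strictly below $2\sqrt{\ell_w\ell_{w'}}$, the contradiction with the independence of $S$ is immediate.
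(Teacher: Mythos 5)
Your proposal follows the same route as the paper's proof: translate $1/\tau$-closeness under mean power into $d_{vu}\le\tau^{1/\alpha}\sqrt{\ell_v\ell_u}$ for $u\in\{w,w'\}$, chain the two bounds through $\ell_v$ by the triangle inequality, and contradict the independence of $S$. The one substantive difference is that you are \emph{more} careful than the paper at the triangle-inequality step. Because the bidirectional link distance is a minimum over endpoint pairs, the correct chaining is $d_{ww'}\le d_{wv}+\ell_v+d_{vw'}$ (the diameter of the intermediate link must be added when the nearest endpoints of $\ell_v$ to $\ell_w$ and to $\ell_{w'}$ differ); the paper writes $d_{w'w}\le d_{wv}+d_{vw'}$ and silently drops the $\ell_v$ term. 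You correctly observe that retaining it yields only $d_{ww'}<(2+\tau^{-2/\alpha})\sqrt{\ell_w\ell_{w'}}$, which fails by a hair to contradict the $2$-independence bound $d_{ww'}\ge2\sqrt{\ell_w\ell_{w'}}$ --- and a one-dimensional configuration with $\ell_w=\ell_{w'}$ just above $\tau^{2/\alpha}\ell_v$ and the two long links on opposite sides of $\ell_v$ shows this slack is genuinely needed. Your first proposed repair, invoking $3$-independence, is exactly what the lemma's application supplies (feasible slots are $3$-independent since $\beta\ge3^\alpha$, by Lemma~\ref{lem:ind-separation}) and closes the gap cleanly, at the cost of nominally strengthening the hypothesis; alternatively one can keep $2$-independence and raise the length threshold from $\tau^{2/\alpha}\ell_v$ to $\Lambda\ell_v=2\tau^{2/\alpha}\ell_v$, which pushes the estimate down to roughly $\sqrt{2}\,\sqrt{\ell_w\ell_{w'}}$ and restores the contradiction. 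The vaguer packing alternative is unnecessary. In short, your argument is the paper's argument with the constant accounting done honestly, and either of your concrete fixes is harmless for the way the lemma is used.
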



\begin{proof}
Suppose the lemma is false and let $\ell_{w}, \ell_{w'}$ be two links in $S$
that are longer than $\tau^{2/\alpha}$ times $\ell_v$ and affect it by at least
$1/\tau$ each. Suppose without loss of generality that $\ell_w \ge \ell_{w'}$.
The assumption of affectance under mean power assignment implies that
\[ \left( \frac{\sqrt{\ell_v \ell_u}}{d_{vu}} \right)^\alpha \ge 1/\tau, \]
for $u \in \{w, w'\}$.
Thus, $d_{vu} \le \tau^{1/\alpha} \sqrt{\ell_v \ell_u}$.
In the bidirectional case, $d_{vu} = d_{uv}$.
Thus, by the triangular inequality, we have that
\[ d_{w' w} = d_{w w'} \le d_{wv} + d_{v w'} \le 2 \tau^{1/\alpha} \sqrt{\ell_v \ell_w} 
 < 2 \tau^{1/\alpha} \sqrt{(\ell_{w'}/\Lambda) \ell_w} 
 = \sqrt{2 \ell_{w'} \ell_w}\ . \]
Then, $\ell_w$ and $\ell_w'$ are not $2$-independent,
which contradicts our assumption.
\end{proof}

The rest of the argument is identical to the unidirectional case.
Lemma \ref{lem:is-is-feasible} still holds, while we get a stronger
version of Lemma \ref{lem:clawlike-freeness}.

\begin{lemma}
Let $S$ be a well-separated linkset in a bidirectional fading metric
and let $q$ be as in Lemma \ref{lem:is-is-feasible}.
Then, $H(S)$ is $O(1)$-SINR-inductive.
The inductive ordering is that of non-decreasing link length.
\label{lem:bidi-clawlike-freeness}
\end{lemma}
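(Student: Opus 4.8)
The structure mirrors that of Lemma~\ref{lem:clawlike-freeness}, so the plan is to replay that argument but substitute the stronger Lemma~\ref{lem:single-long-aff} for Lemma~\ref{lem:length-sep}. First I would take $\ell_v$ to be the shortest link in $S$ and let $X$ be an arbitrary SINR-feasible subset of the closed neighborhood $N_H[\ell_v]$; the goal is to show $|X| = O(1)$, after which the inductive ordering by non-decreasing link length follows exactly as before by deleting $\ell_v$ and recursing on $S \setminus \{\ell_v\}$.

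\textbf{Bounding the nearly-equilength part.} Split $X$ into $X_v = X \cap S_v$, where $S_v$ is the set of links in $S$ nearly-equilength with $\ell_v$, and $\hat X = X \setminus X_v$. For $X_v$ the argument is unchanged from Lemma~\ref{lem:clawlike-freeness}: since $X_v$ is a $\beta$-signal nearly-equilength linkset it is independent in $U_{z_2(\beta)}(S_v)$ by Thm.~\ref{thm:nearequilengths-and-udg}, it lies in the closed neighborhood of $\ell_v$ in $U_{z_1(2\beta)}(S_v)$, and Obs.~\ref{obs:clawfree-ness} gives $|X_v| \le C(1 + 2 z_1(2\beta)/z_2(\beta))^A = O(1)$.

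\textbf{Bounding the length-separated part using the bidirectional lemma.} The links in $\hat X$ are length-separated from $\ell_v$, i.e.\ each has length exceeding $\Lambda \ell_v = 2\tau^{2/\alpha}\ell_v > \tau^{2/\alpha}\ell_v$. Since $\hat X$ is SINR-feasible with $\beta \ge 3^\alpha \ge 2^\alpha$, Lemma~\ref{lem:ind-separation} shows $\hat X$ is $2$-independent (indeed $3$-independent). Now I would invoke Lemma~\ref{lem:single-long-aff} directly: every link in $\hat X$, being a neighbor of $\ell_v$ in $H(S)$ of length-separated type, is $1/\tau$-close to $\ell_v$ under mean power, and the lemma says there is at most one such link. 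Hence $|\hat X| \le 1$, so $|X| = |X_v| + |\hat X| = O(1) + 1 = O(1)$, as required.

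\textbf{Main obstacle.} There is essentially no obstacle: the only subtlety is making sure the hypotheses of Lemma~\ref{lem:single-long-aff} are met on $\hat X$ — namely that $\hat X$ (or $\hat X \cup \{\ell_v\}$) is $2$-independent and that the relevant affectances really are under mean power with the right threshold — but this follows immediately from the definition of adjacency of the length-separated type in $H(S)$ and from the fact that SINR-feasible sets are $3$-independent. The bookkeeping about which direction of affectance ($a_v(w)$ vs.\ $a_w(v)$) occurs is absorbed by the symmetric definition of $t$-closeness and by the bidirectional symmetry $d_{uv} = d_{vu}$, so no separate case analysis is needed beyond what Lemma~\ref{lem:single-long-aff} already handles.
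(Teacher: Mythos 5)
Your proposal is correct and follows exactly the route the paper intends: the paper itself gives no written proof of this lemma, merely asserting that "the rest of the argument is identical to the unidirectional case" with Lemma~\ref{lem:single-long-aff} replacing Lemma~\ref{lem:length-sep}, and your write-up fills in precisely those details (the $O(1)$ bound on the nearly-equilength neighbors via Obs.~\ref{obs:clawfree-ness}, and the bound $|\hat X|\le 1$ on the length-separated neighbors via $2$-independence of SINR-feasible sets and the symmetry $d_{uv}=d_{vu}$). No gaps.
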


As before, we partition $L$ into $O(\log n)$ well-separated subsets
$L_1, L_2 \ldots$, using Prop.~\ref{prop:approx-pi-probs} on each of them.
This results in the following approximation results.

\begin{theorem}
There is an $O(\log n)$-approximation for the bidirectional
versions of {\sched}, {\ssched}, and {\wcapacity} in fading metrics.
\label{thm:bidi}
\end{theorem}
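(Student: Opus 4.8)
The plan is to follow the same template as in the unidirectional case (Theorem at end of Section~\ref{sec:unidir}), replacing the two key lemmas with their bidirectional strengthenings. Concretely, I would first recall that in the bidirectional model $d_{uv}=d_{vu}$, so ``$2$-independent'' means $d_{uv}^2 \ge 4\ell_u\ell_v$ and the affectance under mean power is symmetric in the natural way. By the signal-strengthening assumption $\beta \ge 3^\alpha \ge 2^\alpha$ (more than enough), Lemma~\ref{lem:ind-separation} implies every slot of an optimal schedule is $2$-independent, so Lemma~\ref{lem:single-long-aff} applies: for any link $\ell_v$, at most one link $\ell_w$ in a feasible slot with $\ell_w > \Lambda\ell_v = 2\tau^{2/\alpha}\ell_v$ is $1/\tau$-close to $\ell_v$ (a fortiori this holds with the weaker threshold $\tau^{2/\alpha}\ell_v$). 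This replaces the $O(\log\log\Delta)$ bound of Lemma~\ref{lem:length-sep} by $O(1)$.

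Next I would prove Lemma~\ref{lem:bidi-clawlike-freeness}. The argument mirrors Lemma~\ref{lem:clawlike-freeness}: take $\ell_v$ the shortest link in the well-separated set $S$ and let $X$ be an SINR-feasible (hence $2$-independent) subset of the closed neighborhood $N_{H(S)}[\ell_v]$. Split $X$ into the nearly-equilength part $X_v = X\cap S_v$ and the length-separated part $X\setminus X_v$. For $X_v$, exactly as before, Theorem~\ref{thm:nearequilengths-and-udg} plus Observation~\ref{obs:clawfree-ness} give $|X_v| = O(1)$. For $X\setminus X_v$, every such link is length-separated from $\ell_v$ (differs by a factor $>\Lambda$) and, being adjacent to $\ell_v$ in $H(S)$ via clause~(b), is $1/\tau$-close to $\ell_v$; Lemma~\ref{lem:single-long-aff} then forces $|X\setminus X_v| \le 1$. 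Hence $|X| = O(1)$, and ordering $\ell_v$ first and recursing on $S\setminus\{\ell_v\}$ yields the non-decreasing-length ordering witnessing $O(1)$-SINR-inductiveness. Note that Lemma~\ref{lem:is-is-feasible} (compatibility of $H(S)$ with SINR-feasibility) did not use any asymmetry, so it carries over verbatim, giving that $H(S)$ is compatible with the SINR property in the bidirectional setting too.

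Finally, to get Theorem~\ref{thm:bidi} itself, I would partition the input linkset $L$ into $M = \lceil \lg 2\Lambda\rceil = O(\log n)$ well-separated classes $L_1,\dots,L_M$ exactly as in Section~\ref{sec:unidir} (so $\lg\Lambda = O(\log n)$ since $\Lambda = 2(2\beta n)^{2/\alpha}$ and $\beta$ is a constant), apply Proposition~\ref{prop:approx-pi-probs} with $k = O(1)$ to each class $L_i$ using the graph $H(L_i)$ — which is compatible with SINR-feasibility by Lemma~\ref{lem:is-is-feasible} and $O(1)$-SINR-inductive by Lemma~\ref{lem:bidi-clawlike-freeness}, with the inductive ordering and the oracle being simply sorting by length — to obtain $O(1)$-approximations for {\sched}, {\ssched}, and {\wcapacity} restricted to $L_i$, and combine: for the capacity problems take the best of the $M$ solutions (losing a factor $M$ only for the weighted/unweighted capacity packing across classes — actually for capacity we take the union and lose a further $O(\log n)$, or more carefully we note an optimal feasible set meets one class in a $1/M$ fraction), and for scheduling concatenate the $M$ schedules. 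In each case the overall ratio is $O(1)\cdot O(\log n) = O(\log n)$. The only mildly delicate point is the bookkeeping of how the per-class guarantees combine into a global $O(\log n)$ bound, but this is routine and identical to the unidirectional argument; the genuine content is entirely in Lemma~\ref{lem:single-long-aff}, which has already been proved in the excerpt.
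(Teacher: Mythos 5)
Your proposal is correct and follows exactly the paper's route: Lemma~\ref{lem:is-is-feasible} carries over unchanged, Lemma~\ref{lem:single-long-aff} replaces the $O(\log\log\Delta)$ bound of Lemma~\ref{lem:length-sep} by a bound of one length-separated neighbor, yielding $O(1)$-SINR-inductiveness of $H(S)$, and Proposition~\ref{prop:approx-pi-probs} applied to the $O(\log n)$ well-separated classes gives the result. The only wobble is the parenthetical about ``taking the union'' for the capacity problems (a union of per-class feasible sets need not be feasible); your primary argument --- take the best of the $M$ per-class solutions, since an optimal feasible set has at least a $1/M$ fraction of its weight in some class --- is the right one.
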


Finally, we get a tighter relationship with graphs.
Let $G'(L)$ be defined as in the previous subsection.

\begin{corollary}
There is an algorithm that outputs a feasible scheduling using
$O(\log n) \cdot \chi(G'(L))$ slots in fading metrics,
in the bidirectional setting.
\end{corollary}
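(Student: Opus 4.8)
The plan is to combine the bidirectional approximation result (Theorem~\ref{thm:bidi}) with the graph-theoretic structure exposed in the proof of Lemma~\ref{lem:bidi-clawlike-freeness}. Recall that $G'(L)$ is the edge-union of the graphs $H(L_i)$ over the $O(\log n)$ well-separated length classes $L_1, L_2, \ldots$, with all pairs of links in distinct classes made adjacent. The key point is that any independent set $X$ in $G'(L)$ is, in particular, contained entirely within one length class $L_i$ and is independent in $H(L_i)$; by Lemma~\ref{lem:is-is-feasible} (which holds verbatim in the bidirectional setting) such an $X$ is SINR-feasible under mean power. Conversely, a proper coloring of $G'(L)$ is exactly a partition of $L$ into color classes each of which is independent in $G'(L)$, hence each SINR-feasible. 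So a proper coloring of $G'(L)$ with $\chi(G'(L))$ colors is already a feasible schedule of that length --- the issue is only algorithmic: we must \emph{produce} a schedule of length $O(\log n)\cdot\chi(G'(L))$ efficiently, since computing $\chi$ exactly is hard.

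First I would recall that a coloring of $L$ that uses a bounded multiple of $\chi(G'(L))$ colors can be obtained class by class: it suffices to color each $H(L_i)$ using $O(1)\cdot\chi(H(L_i))$ colors, since $\chi(G'(L)) = \max_i \chi(H(L_i))$ (distinct classes are completely joined, so the chromatic number of the union is the max of the chromatic numbers, not the sum --- here one must be slightly careful: $G'$ joins \emph{different} classes completely, so $\chi(G') = \sum$? No --- re-reading the definition, ``links in different length classes are adjacent in $G'$'' means $G'$ is a \emph{join}, so $\chi(G'(L)) = \sum_i \chi(H(L_i))$, which is $\ge \chi(H(L_i))$ for each $i$). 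Either way, $\chi(H(L_i)) \le \chi(G'(L))$ for every $i$, which is what we need. Now apply the partitioning (coloring) half of Proposition~\ref{prop:approx-pi-probs} to each $H(L_i)$: by Lemma~\ref{lem:bidi-clawlike-freeness}, $H(L_i)$ is $O(1)$-SINR-inductive with the non-decreasing-length ordering, so the Minimum Partition into SINR-feasible subsets problem on $L_i$ is approximated within a factor of $O(1)$. The optimum partition of $L_i$ into SINR-feasible subsets has size at most $\chi(H(L_i)) \le \chi(G'(L))$ (any proper coloring of $H(L_i)$ is a feasible partition, by compatibility), so we get an SINR-feasible schedule of $L_i$ with $O(1)\cdot\chi(G'(L))$ slots.

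Finally I would assemble the pieces: run the above on each of the $M = O(\log n)$ classes $L_1, \ldots, L_M$, obtaining for each a feasible schedule with $O(\chi(G'(L)))$ slots, and concatenate these $M$ schedules into one schedule of $L$. Each slot used is a subset of some $L_i$ that is SINR-feasible, so the concatenation is a feasible schedule of all of $L$, of total length $M \cdot O(\chi(G'(L))) = O(\log n)\cdot\chi(G'(L))$ slots. This is exactly the claimed bound. The only mild subtlety --- the ``obstacle,'' though it is not really hard --- is bookkeeping the constant in Lemma~\ref{lem:is-is-feasible} and Lemma~\ref{lem:bidi-clawlike-freeness}: one must check that the feasibility argument of Lemma~\ref{lem:is-is-feasible} goes through unchanged in the bidirectional model (it does, since that lemma only uses nearly-equilength separation and $1/\tau$-closeness, both of which have the same meaning bidirectionally), and that the $O(1)$-inductiveness of Lemma~\ref{lem:bidi-clawlike-freeness} is genuinely a fixed constant independent of $n$ and $\Delta$ (it is, being $C(1+2z_1(2\beta)/z_2(\beta))^A + 1$ from the nearly-equilength part plus the single long neighbor from Lemma~\ref{lem:single-long-aff}). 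Everything else is a direct invocation of Proposition~\ref{prop:approx-pi-probs} and the length-class decomposition already used for Theorem~\ref{thm:bidi}.
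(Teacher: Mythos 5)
Your argument is correct and is exactly the (implicit) derivation the paper intends: apply the partitioning half of Proposition~\ref{prop:approx-pi-probs} to each of the $O(\log n)$ well-separated classes $L_i$, using the $O(1)$-SINR-inductiveness of $H(L_i)$ (Lemma~\ref{lem:bidi-clawlike-freeness}) and the compatibility of $H(L_i)$ with SINR-feasibility (Lemma~\ref{lem:is-is-feasible}), so that each class is scheduled in $O(1)\cdot\chi(H(L_i)) \le O(1)\cdot\chi(G'(L))$ slots and the concatenation uses $O(\log n)\cdot\chi(G'(L))$ slots. Your side remarks --- that $\chi(H(L_i))\le\chi(G'(L))$ regardless of whether the join makes $\chi(G')$ the sum or the max, and that the relevant lemmas carry over to the bidirectional model --- are accurate and consistent with the paper.
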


\subsection{Construction}
\label{sec:construction}

We now give evidence that the upper bounds obtained are close to the best
possible for oblivious power functions.
A similar result follows also from the constructions in \cite{FKV09}
by analyzing the dependence on $\Delta$.

We say that a function $f$ is \emph{well-behaved} if there is an
$\epsilon > 0$, such that either a) for any $x > x'>0$, it holds that
$f(x) = O((x/x')^{\alpha-\epsilon})f(x')$ or
b) for any $x > x'>0$, it holds that
$f(x) = \Omega((x/x')^{\epsilon})f(x')$. 
Essentially, the definition stipulates that the function either grows
at a steady polynomial (possibly of very small degree) rate, or is
limited in its growth by a polynomial of degree strictly less than $\alpha$.
This means that the function can be jittery and locally unstable, but
on a large scale it can't be all over the place.
Intuitively, any reasonable power assignment function is well-behaved;
in particular, it holds for all functions considered in the literature,
which are polynomials.

\begin{theorem}
For any well-behaved power function $\phi$, 
there is a SINR-feasible instance for which any schedule under $\phi$
requires $\Omega(\log \log \Delta)$ slots.
\label{thm:construction}
\end{theorem}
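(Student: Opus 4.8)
The plan is to build a single SINR-feasible instance, parametrized by a scale factor, such that no matter which well-behaved power function $\phi$ is used, $\phi$ must ``spread out'' the links into $\Omega(\log\log\Delta)$ slots. The natural template is a recursive/nested construction reminiscent of Lemma~\ref{lem:length-sep}: take a sequence of link lengths $\ell_1 < \ell_2 < \cdots < \ell_t$ growing doubly-exponentially (so that $\ell_t / \ell_1 = \Delta$ forces $t = \Theta(\log\log\Delta)$), and place the senders and receivers so that (i) the whole set is SINR-feasible under \emph{some} power assignment (e.g.\ a tailored non-oblivious one, or mean power on a suitably separated layout), but (ii) for \emph{any} well-behaved $\phi$, and any two links $\ell_i$, $\ell_j$ in the set with $i\ne j$, the pair is mutually ``bad'' under $\phi$ --- i.e.\ $\max(a_i(j),a_j(i)) > 1/\beta$ --- so that all $t$ links must go into distinct slots.

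The key technical step is (ii): showing that well-behavedness alone pins down $\phi$ enough that it cannot separate geometrically-close links of very different lengths. I would exploit the dichotomy in the definition directly. In case (a), where $\phi(x) = O((x/x')^{\alpha-\epsilon})\phi(x')$, the power of a long link is ``too small'' relative to its length: placed near a short link's receiver, the long link receives with a weak enough signal that the short link's transmission (at its own $\phi$-power) overwhelms it --- compute $a_{\text{short}}(\text{long}) = \frac{\phi(\text{short})}{\phi(\text{long})}\cdot\bigl(\frac{\ell_{\text{long}}}{d}\bigr)^\alpha$ and use the growth bound on $\phi(\text{long})/\phi(\text{short})$ together with the chosen distances to force this above $1/\beta$. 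In case (b), where $\phi$ grows at least like $(x/x')^{\epsilon}$, the long link's power is ``too large'': $a_{\text{long}}(\text{short})$ becomes huge because $\phi(\text{long})/\phi(\text{short})$ dominates. The geometry must be chosen so that the \emph{same} distances work for both horns of the dichotomy; this is where most of the care goes --- one wants the senders clustered (distance $\approx \ell_v$ apart, for a common short anchor link $\ell_v$, as in Lemma~\ref{lem:length-sep}) but the length ratios $\ell_{i+1}/\ell_i$ chosen to beat the polynomial slack $\epsilon$ in $\phi$'s growth, which forces the doubly-exponential growth and hence the $\log\log\Delta$ bound.

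To make the instance genuinely feasible (requirement (i)), I would keep $t$ ``blocks'' well-separated in space --- each block being a small gadget of the above type, placed far enough apart that cross-block affectance is negligible (invoking something like the Far-away Lemma~\ref{lem:faraway-aff} with uniform power on the inter-block scale, or simply a direct packing estimate). Inside each block one exhibits an explicit feasible (non-oblivious) assignment, e.g.\ choosing powers proportional to $d_{vv}^\alpha = \ell_v^\alpha$ on the dominant link and scaling the interferers down, so that (\ref{eq:sinr}) holds with room to spare. Then the blocks collectively certify: the instance \emph{has} a one-slot feasible schedule (or $O(1)$ slots), yet any oblivious well-behaved $\phi$ needs $\ge t = \Omega(\log\log\Delta)$ slots because within each block all links pairwise conflict under $\phi$.

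The main obstacle I anticipate is engineering the geometry so that a \emph{single} fixed layout simultaneously defeats both cases (a) and (b) of well-behavedness --- these push $\phi$ in opposite directions (power too small vs.\ too large), and the distances that make a short link drown out a long one are not a priori the same as those that make a long link drown out a short one. Resolving this likely requires placing, for each pair of consecutive length-scales, \emph{two} witness configurations (one exploiting $a_{\text{short}\to\text{long}}$, one exploiting $a_{\text{long}\to\text{short}}$) and arguing that at least one of them triggers regardless of which horn $\phi$ satisfies; alternatively, a symmetric placement with $d_{vw}\approx d_{wv}\approx\sqrt{\ell_v\ell_w}$ (the natural scale for mean power) might let a single inequality, driven by the ratio $\phi(\ell_w)/\phi(\ell_v)$ versus $(\ell_w/\ell_v)^{\alpha/2}$, cover both cases at once. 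Getting that clean is the crux.
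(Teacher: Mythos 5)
Your proposal matches the paper's construction essentially exactly: a chain of doubly-exponentially growing, nested links whose receivers cluster near a common point, shown pairwise conflicting under any $\phi$ satisfying one horn of well-behavedness, with the other horn handled by a spatially separated copy in which the roles of senders and receivers are reversed, and with one-slot feasibility certified by a tailored non-well-behaved oblivious assignment (the paper uses $\Psi(\ell_v)=\ell_v^\alpha/\log\ell_v$ and, for the reversed copy, $\lg\ell_v$). The ``crux'' you flag is resolved in the paper precisely by your first suggestion --- two complementary witness configurations, one defeating each horn of the dichotomy, combined by spatial separation --- rather than by a single symmetric layout.
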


\begin{proof}
Consider first the case when $\phi$ grows moderately slowly, 
i.e., there are fixed constants $\epsilon, c, c_0$ such that
for any $x, x'$ with $x > c_0 x'$,
$\phi(x) \le c \cdot x^{\alpha - \epsilon} \phi(x')$.
We assume for simplicity that $\beta = 1$.

Let $t= \max(\lceil (2\alpha+\lg c)/\epsilon\rceil, 4)$
and $c_1 = \lg\lg c_0$.
Consider the following set of links $L = \{\ell_1, \ell_2, \ldots, \ell_n \}$
located on the real line, where the length of link $\ell_i$ is $\ell_i = 2^{t^{i+c_1}}$.
Let $a_i = \sum_{j=0}^i \ell_j$, where $\ell_0$ denotes $2^{t^{c_1}}$.
Position the receiver $r_i$ of $\ell_i$ at location 
$+a_{i-1}$ and the sender $s_i$ at location $-(\ell_i - a_{i-1})$.
Observe that for any $i > j$, we have that
$\ell_i \ge c_0 \ell_j$, and thus
\begin{equation}
  \frac{\phi(\ell_i)}{\phi(\ell_j)} 
   \le c \left(\frac{\ell_i}{\ell_j}\right)^{\alpha-\epsilon}
   < c \ell_i^{\alpha-\epsilon} \ ,
\label{eqn:pow-ratio-bound}
\end{equation}
where the second inequality uses that $\ell_j > 1$.
Observe that for $i > j$, 
\[ d_{ji} = (\ell_{j} - a_{j-1}) + a_{i-1} 
  \le \ell_{i-1} - a_{i-2} + a_{i-1} 
   = 2 \ell_{i-1}, \]
and that
\[ \lg \frac{\ell_i^\epsilon}{\ell_{i-1}^\alpha} = t^{i+c_1} \epsilon - t^{i-1+c_1}\alpha
    \ge t\epsilon - \alpha \ge \lg c + \alpha, \]
which together imply that
\begin{equation}
  \ell_i^\epsilon \ge c 2^\alpha \ell_{i-1}^\alpha \ge c d_{ji}^\alpha \ .
\label{eqn:growth-bound}
\end{equation}
Thus, using Inequalities (\ref{eqn:pow-ratio-bound}) and
(\ref{eqn:growth-bound}), respectively,
we have that for $i > j$,
\[ a_j(i) = \frac{\phi_j}{\phi_i} \cdot \frac{\ell_i^\alpha}{d_{ji}^\alpha}
  > \frac{1}{c \ell_i^{\alpha-\epsilon}} \cdot \frac{\ell_i^\alpha}{d_{ji}^\alpha}
  = \frac{\ell_i^\epsilon}{c d_{ji}^\alpha} 
  \ge 1\ . \]
Hence, in any schedule based on the mean assignment, each of the $n$ links
must be assigned to distinct slots.

Consider instead the oblivious power assignment function
$\Psi(v) = \ell_v^\alpha/\log \ell_v$.
Note that for $i > j$ in the configuration above,
$d_{ji} = \ell_j - a_{j-1} + a_{i-1} > \ell_j$.
Then, under $\Psi$, we have that for $i > j$,
\[ a_{j}(i) = \frac{\Psi(\ell_j)/d_{ji}^\alpha}{\Psi(\ell_i)/\ell_i^\alpha} 
= \frac{\ell_j^{\alpha}}{d_{ji}^\alpha \log \ell_j}\cdot \log \ell_i
= \frac{\ell_j^{\alpha} t^{i-j} }{d_{ji}^\alpha}
\le t^{i-j}\ . \]
Note that for $k > i$, it holds that $d_{ki} = a_{i-1} + \ell_k - a_{k-1} \ge \ell_k/2$. 
Thus, for $k > i$,
\[ a_k(i) = \frac{\Psi(\ell_k)/d_{ki}^\alpha}{\Psi(\ell_i)/\ell_i^\alpha} 
 = \frac{\ell_k^{\alpha}}{d_{ki}^\alpha \log \ell_k}\cdot \log \ell_i
 = \frac{\ell_k^{\alpha} t^{i-k} }{d_{ki}^\alpha}
 \le 2 t^{i-k}\ . \]
It follows that under $\Psi$, for any link $\ell_i \in L$, it holds that
\[ a_{L}(i) \le \sum_{j < i} t^{j-i} + \sum_{k > i} 2t^{i-k}
     < 3 \sum_{k=1}^\infty t^{-k} 
      = \frac{3}{t-1} \le 1\ , \]
using that $t \ge 4$.
It follows that the linkset $L$ is SINR-feasible.
We thus obtain a lower bound on the performance ratio of any schedule
using $\phi$ of $n = \Omega(\log\log \Delta)$.

Consider now the complementary instance, where the direction
or the role of senders and receivers, has been reversed.
Then, 
nearly identical computation shows that any function that grows
no slower than $\Omega((x/x')^\epsilon)$ can also only schedule a single link in a
single slot. On the other hand, using power assignment $f(\ell_v) =
\lg \ell_v$, shows that the construction is SINR-feasible, 
giving the same $\Omega(\log\log \Delta)$ lower bound.

Finally, we can combine the two constructions into a single instance
that is hard to schedule for all well-behaved oblivious power functions, by taking
disjoint copies that are sufficiently separated in space.
\end{proof}

\section{Conclusions}

From a practical
perspective, it would be interesting if the logarithmic factor could
be removed, giving a $O(\log\log \Delta)$-approximation. Alternatively, 
non-oblivious power strategies that could be implemented in a distributed setting
would be highly desirable.

\subsection*{Acknowledgement}

I would like to extend sincere thanks to Tigran Tonoyan for kindly pointing
out an error in an earlier version and suggesting a correction.
I also thank Thomas Erlebach and Pradipta Mitra for helpful comments.
Finally, the existence of this paper owes much to Roger Wattenhofer for
introducing me to these fascinating questions.

\bibliographystyle{abbrv}
\bibliography{ref-jour}

\begin{thebibliography}{10}

\bibitem{AkcogluADK2002}
K.~Akcoglu, J.~Aspnes, B.~DasGupta, and M.-Y. Kao.
\newblock Opportunity-cost algorithms for combinatorial auctions.
\newblock In E.~J. Kontoghiorghes, B.~Rustem, and S.~Siokos, editors, {\em
  Applied Optimization 74: Computational Methods in Decision-Making, Economics
  and Finance}, pages 455--479. Kluwer Academic Publishers, 2002.

\bibitem{AD09}
M.~Andrews and M.~Dinitz.
\newblock Maximizing capacity in arbitrary wireless networks in the sinr model:
  Complexity and game theory.
\newblock In {\em Proceedings of the 28th Annual IEEE Conference on Computer
  Communications (INFOCOM)}, April 2009.

\bibitem{Assouad}
P.~Assouad.
\newblock Plongements lipschitziens dans $\mathbf{R}^n$.
\newblock {\em Soc. Math. France}, 111(4):429--448, 1983.

\bibitem{AulMosPenPer08}
V.~Auletta, L.~Moscardelli, P.~Penna, and G.~Persiano.
\newblock Interference games in wireless networks.
\newblock In {\em Proceedings of the 4th International Workshop On Internet And
  Network Economics (WINE)}, volume 5385 of {\em LNCS}, pages 278--285, 2008.

\bibitem{AEK08}
C.~Avin, Y.~Emek, E.~Kantor, Z.~Lotker, D.~Peleg, and L.~Roditty.
\newblock {SNR} diagrams: Towards algorithmically usable {SINR} models of
  wireless networks.
\newblock In {\em Proceedings of the Twenty-Eighth Annual ACM SIGACT-SIGOPS
  Symposium on Principles of Distributed Computing (PODC)}, 2008.

\bibitem{ALP09}
C.~Avin, Z.~Lotker, and Y.~A. Pignolet.
\newblock On the power of uniform power: Capacity of wireless networks with
  bounded resources.
\newblock In {\em Proceedings of the $\mathit{17}^{th}$ Annual European
  Symposium on Algorithms (ESA)}, 2009.

\bibitem{BrarBS08}
G.~S. Brar, D.~M. Blough, and P.~Santi.
\newblock The {SCREAM} approach for efficient distributed scheduling with
  physical interference in wireless mesh networks.
\newblock In {\em Proceedings of the 28th IEEE International Conference on
  Distributed Computing Systems (ICDCS 2008)}, pages 214--224, 2008.

\bibitem{chafekar07}
D.~Chafekar, V.~Kumar, M.~Marathe, S.~Parthasarathy, and A.~Srinivasan.
\newblock {Cross-layer Latency Minimization for Wireless Networks using SINR
  Constraints}.
\newblock In {\em Proceedings of the 8th ACM Int. Symposium on Mobile Ad Hoc
  Networking and Computing (MobiHoc)}, 2007.

\bibitem{Clarkson}
K.~Clarkson.
\newblock Nearest-neighbor searching and metric space dimensions.
\newblock In {\em Nearest-Neighbor Methods for Learning and Vision: Theory and
  Practice}. MIT Press, 2005.

\bibitem{Couture07}
M.~Couture, M.~Barbeau, P.~Bose, P.~Carmi, and E.~Kranakis.
\newblock Location oblivious distributed unit disk graph coloring.
\newblock In {\em Proceedings of the 14th International Conference on
  Structural Information and Communication Complexity (SIROCCO):}, pages
  222--233. Springer-Verlag, 2007.

\bibitem{ElBattE04journal}
T.~A. ElBatt and A.~Ephremides.
\newblock Joint scheduling and power control for wireless ad hoc networks.
\newblock {\em IEEE Transactions on Wireless Communications}, 3(1):74--85,
  2004.

\bibitem{ErlebachGrant}
T.~Erlebach and T.~Grant.
\newblock Scheduling multicast requests in the sinr model.
\newblock In {\em Proceedings of the Sixth International Workshop on
  Algorithmic Aspects of Wireless Sensor Networks (ALGOSENSORS)}, 2010.

\bibitem{FanghanelOnline2010}
A.~Fangh{\"a}nel, S.~Geulen, M.~Hoefer, and B.~V{\"o}cking.
\newblock Online capacity maximization in wireless networks.
\newblock In {\em Proceedings of the 22nd Annual ACM Symposium on Parallel
  Algorithms and Architectures (SPAA)}, pages 92--99, 2010.

\bibitem{FKRV09}
A.~Fangh\"anel, T.~Ke{\ss}elheim, H.~R\"acke, and B.~V\"ocking.
\newblock Oblivious interference scheduling.
\newblock In {\em Proceedings of the $\mathit{28}^{th}$ Annual ACM Symposium on
  Principles of Distributed Computing (PODC)}, August 2009.

\bibitem{FKV09}
A.~Fangh\"anel, T.~Ke{\ss}elheim, and B.~V\"ocking.
\newblock Improved algorithms for latency minimization in wireless networks.
\newblock In {\em Proceedings of the 37th International Conference on
  Algorithms, Languages and Programming (ICALP)}, July 2009.

\bibitem{Toth}
G.~Fejes~T{\'o}th.
\newblock {\em Lagerungen in der Ebene, auf der Kugel und in Raum}.
\newblock Springer-Verlag, Berlin, 2nd ed. edition, 1972.

\bibitem{gao08}
Y.~Gao, J.~C. Hou, and H.~Nguyen.
\newblock Topology control for maintaining network connectivity and maximizing
  network capacity under the physical model.
\newblock In {\em Proceedings of the 27th Annual IEEE Conference on Computer
  Communications (INFOCOM)}, 2008.

\bibitem{GHWW09}
O.~Goussevskaia, M.~M. Halld\'{o}rsson, R.~Wattenhofer, and E.~Welzl.
\newblock {Capacity of Arbitrary Wireless Networks}.
\newblock In {\em Proceedings of the 28th Annual IEEE Conference on Computer
  Communications (INFOCOM)}, April 2009.

\bibitem{Goussevskaia2008Local}
O.~Goussevskaia, T.~Moscibroda, and R.~Wattenhofer.
\newblock {Local Broadcasting in the Physical Interference Model}.
\newblock In {\em Proceedings of the Fifth ACM SIGACT-SIGOPS International
  Workshop on Foundations of Mobile Computing}, August 2008.

\bibitem{gouss2007}
O.~Goussevskaia, Y.~A. Oswald, and R.~Wattenhofer.
\newblock {Complexity in Geometric SINR}.
\newblock In {\em Proceedings of the Eigth ACM Int. Symposium on Mobile Ad Hoc
  Networking and Computing (MobiHoc)}, pages 100--109, 2007.

\bibitem{GronkMibiHoc01}
J.~Gr{\"o}nkvist and A.~Hansson.
\newblock {Comparison between graph-based and interference-based STDMA
  scheduling}.
\newblock In {\em Proceedings of the Second ACM Int. Symposium on Mobile Ad Hoc
  Networking and Computing (MobiHoc)}, pages 255--258, 2001.

\bibitem{kumar00}
P.~Gupta and P.~R. Kumar.
\newblock {The Capacity of Wireless Networks}.
\newblock {\em IEEE Trans. Information Theory}, 46(2):388--404, 2000.

\bibitem{us:esa}
M.~M. Halld\'{o}rsson.
\newblock Wireless scheduling with power control.
\newblock In {\em Proceedings of the $\mathit{17}^{th}$ European Symposium on
  Algorithms (ESA)}, September 2009.

\bibitem{HallMitra11}
M.~M. Halld\'{o}rsson and P.~Mitra.
\newblock {Wireless Capacity with Oblivious Power in General Metrics}.
\newblock In {\em Proceedings of the 22nd ACM-SIAM Symposium on Discrete
  Algorithms (SODA)}, January 2011.

\bibitem{HW09}
M.~M. Halld\'{o}rsson and R.~Wattenhofer.
\newblock {Wireless Communication is in APX}.
\newblock In {\em Proceedings of the 37th International Conference on
  Algorithms, Languages and Programming (ICALP)}, July 2009.

\bibitem{Heinonen}
J.~Heinonen.
\newblock {\em Lectures on Analysis in Metric Spaces}.
\newblock Springer, 1999.

\bibitem{KatzVW08}
B.~Katz, M.~V{\"o}lker, and D.~Wagner.
\newblock Link scheduling in local interference models.
\newblock In {\em Proceedings of the Fourth International Workshop on
  Algorithmic Aspects of Wireless Sensor Networks (ALGOSENSORS)}, pages 57--71,
  2008.

\bibitem{kesselheimSODA11}
T.~Kesselheim.
\newblock {A Constant-Factor Approximation for Wireless Capacity Maximization
  with Power Control in the SINR Model}.
\newblock In {\em Proceedings of the 22nd ACM-SIAM Symposium on Discrete
  Algorithms (SODA)}, January 2011.

\bibitem{LiptonTomkins}
R.~J. Lipton and A.~Tomkins.
\newblock Online interval scheduling.
\newblock In {\em Proceedings of the $\mathit{5}^{th}$ Annual ACM-SIAM
  Symposium on Discrete Algorithms (SODA)}, pages 302--311, 1994.

\bibitem{MaheshwariJD08}
R.~Maheshwari, S.~Jain, and S.~R. Das.
\newblock A measurement study of interference modeling and scheduling in
  low-power wireless networks.
\newblock In {\em Proceedings of the 6th International Conference on Embedded
  Networked Sensor Systems (SenSys)}, pages 141--154, 2008.

\bibitem{Moscibroda07}
T.~Moscibroda.
\newblock The worst-case capacity of wireless sensor networks.
\newblock In {\em Proceedings of the 6th International Conference on
  Information Processing in Sensor Networks (IPSN)}, pages 1--10, 2007.

\bibitem{MoscibrodaOW07}
T.~Moscibroda, Y.~A. Oswald, and R.~Wattenhofer.
\newblock How optimal are wireless scheduling protocols?
\newblock In {\em Proceedings of the 26th Annual IEEE Conference on Computer
  Communications (INFOCOM)}, pages 1433--1441, 2007.

\bibitem{MoWa06}
T.~Moscibroda and R.~Wattenhofer.
\newblock {The Complexity of Connectivity in Wireless Networks}.
\newblock In {\em Proceedings of the 25th Annual IEEE Conference on Computer
  Communications (INFOCOM)}, 2006.

\bibitem{MoWa08}
T.~Moscibroda and R.~Wattenhofer.
\newblock {Coloring Unstructured Radio Networks}.
\newblock {\em Distributed Computing}, 21:271--284, 2008.

\bibitem{Moscibroda2006Protocol}
T.~Moscibroda, R.~Wattenhofer, and Y.~Weber.
\newblock {Protocol Design Beyond Graph-Based Models}.
\newblock In {\em Proceedings of the Fifth Workshop on Hot Topics in Networks
  (HotNets)}, November 2006.

\bibitem{moscibroda06b}
T.~Moscibroda, R.~Wattenhofer, and A.~Zollinger.
\newblock {Topology Control meets SINR: The Scheduling Complexity of Arbitrary
  Topologies}.
\newblock In {\em Proceedings of the $\mathit{6}^{th}$ ACM Int. Symposium on
  Mobile Ad Hoc Networking and Computing (MobiHoc)}, 2006.

\bibitem{ScheidelerRS08}
C.~Scheideler, A.~W. Richa, and P.~Santi.
\newblock An {$O(\log n)$} dominating set protocol for wireless ad-hoc networks
  under the physical interference model.
\newblock In {\em Proceedings of the $\mathit{9}^{th}$ ACM Int. Symposium on
  Mobile Ad Hoc Networking and Computing (MobiHoc)}, pages 91--100, 2008.

\bibitem{Weisstein}
E.~W. Weisstein.
\newblock Circle packing.
\newblock From MathWorld--A Wolfram Web Resource.
  \url{http://mathworld.wolfram.com/CirclePacking.html}.

\bibitem{YeBorodin}
Y.~Ye and A.~Borodin.
\newblock Elimination graphs.
\newblock In {\em Proceedings of the 37th International Conference on
  Algorithms, Languages and Programming (ICALP)}, pages 774--785, 2009.

\end{thebibliography}

\appendix
\section{Approximating Weighted Capacity on $k$-$\pi$-Inductive Graphs}

We apply the algorithm of Ye and Borodin \cite{YeBorodin} for
Weighted Independent Set in sequentially $k$-independent graphs to
the Weighted Maximum $\Pi$-subgraph problem
in $k$-$\pi$-inductive graphs.

\begin{theorem}{\cite{YeBorodin}}
Let $\pi$ be a polynomially-time verifiable property such that
any independent set satisfies $\pi$.
Then, the Weighted Maximum $\pi$-Subgraph problem is $k$-approximable
on graphs that are $k$-$\pi$-inductive.
\end{theorem}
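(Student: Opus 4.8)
The plan is to adapt the standard local ratio argument for Weighted Independent Set on inductive graphs, replacing "independent set" with "$\pi$-subset" wherever the combinatorial bound is invoked. First I would fix the $k$-$\pi$-inductive ordering $v_1,\dots,v_n$ of $V$ guaranteed by hypothesis, and run the algorithm in the reverse direction, i.e. processing $v_n,v_{n-1},\dots,v_1$. Following Ye–Borodin, the local ratio step works as follows: let $w$ be the current weight function, let $v_i$ be the last vertex in the ordering with $w(v_i)>0$, and decompose $w = w_1 + w_2$ where $w_1$ assigns weight $w(v_i)$ to every vertex in $N[v_i] \cap \{v_i, v_{i+1}, \dots, v_n\}$ (the "forward closed neighborhood") and $0$ elsewhere, and $w_2 = w - w_1$ is nonnegative by choice of $v_i$. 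Recurse on $w_2$ (which has strictly smaller support), obtaining a $\pi$-subset $X$; if $X \cup \{v_i\}$ still satisfies $\pi$ then add $v_i$, otherwise leave $X$ as is. Since $\pi$ is hereditary and polynomial-time checkable, this is well-defined and the output is always a $\pi$-subset.

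The key step is the approximation guarantee, which I would prove by induction on the support size of $w$, exactly as in the local ratio method. The inductive hypothesis gives that the recursively computed $X$ is $k$-approximate for $w_2$. It remains to show $X$ (after the possible addition of $v_i$) is $k$-approximate for $w_1$. Here $w_1$ is supported on $F := N[v_i] \cap \{v_i,\dots,v_n\}$ with uniform value $w(v_i)$, so for any $\pi$-subset $Y$ we have $w_1(Y) = w(v_i)\cdot|Y \cap F| \le w(v_i) \cdot \Pi(F) \le k\, w(v_i)$, using the defining inequality of $k$-$\pi$-inductiveness. On the other hand, the algorithm's output always contains at least one vertex of $F$: if $v_i$ was added we are done, and if not, then $X \cup \{v_i\}$ violated $\pi$, which since $\pi$ is hereditary forces some vertex of $X$ to lie in $N[v_i]$, hence in $F$ (as all of $X$'s vertices follow $v_i$ in the ordering — they were processed before $v_i$ in the reverse pass and $v_i$ has $w$-support). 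Therefore $w_1(X) \ge w(v_i) \ge \frac{1}{k} w_1(Y)$ for every feasible $Y$. Combining, $w(X) = w_1(X) + w_2(X) \ge \frac1k(w_1(\mathrm{OPT}) + w_2(\mathrm{OPT})) = \frac1k w(\mathrm{OPT})$, completing the induction.

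The main obstacle is the claim that the output always retains a vertex of $F$ even when $v_i$ is not added; this is precisely the point where the hereditary property of $\pi$ is essential, and where the generalization from "independent set" to "$\pi$-subset" must be handled with care, since unlike graph independence, failure of $\pi(X \cup \{v_i\})$ does not immediately name an offending pair — it only tells us, via heredity and compatibility of $G$ with $\pi$, that $v_i$ must be $G$-adjacent to something in $X$ (otherwise $X\cup\{v_i\}$ would be independent in $G$, hence satisfy $\pi$). I would state this implication carefully as a preliminary observation. Everything else — the decomposition being nonnegative, the recursion terminating, polynomial running time given the two oracles — is routine and I would dispatch it briefly. I expect the whole proof to be about one paragraph longer than the unweighted case treated in Prop.~\ref{prop:approx-pi-probs}.
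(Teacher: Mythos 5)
Your proposal is the recursive (local-ratio) rendering of exactly the Ye--Borodin stack algorithm that the paper uses: your decomposition $w=w_1+w_2$ with $w_1$ supported on the forward closed neighborhood is the paper's push phase (which records the residual weights $\bar w$), and your unwinding is the pop phase; the two charging bounds $w_1(Y)\le k\,w(v_i)$ and $w_1(X)\ge w(v_i)$ are the paper's two summation inequalities. So the approach is the same, but three steps in your write-up would fail as literally stated. First, the peeled vertex must be the \emph{first} vertex in the inductive ordering with positive current weight, not the last: the charging needs $X\subseteq\{v_{i+1},\dots,v_n\}$ so that a vertex of $X$ blocking $v_i$ lies in the forward set $F$ (where the $k$-$\pi$-inductiveness bound applies). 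If $v_i$ is the largest-index positive vertex, the recursion lives on $\{v_1,\dots,v_{i-1}\}$, any blocking vertex is a backward neighbor, and $w_1(X)$ can be $0$. Your later sentence (``all of $X$'s vertices follow $v_i$ in the ordering'') presupposes the first-vertex choice, so this is a slip, but it must be fixed. Second, $w_2=w-w_1$ is \emph{not} nonnegative for any choice of $v_i$: a forward neighbor $v_j\in F$ may have $0<w(v_j)<w(v_i)$. You must either use the local-ratio theorem for signed weights (and check the recursion never selects a vertex of nonpositive residual weight), or do the paper's exact bookkeeping, where unpushed vertices have $\bar w\le 0$ and this only helps the inequalities.

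Third, and this is the point you yourself flag as the main obstacle: the inference ``$X\cup\{v_i\}$ violates $\pi$, hence some vertex of $X$ is a $G$-neighbor of $v_i$, otherwise $X\cup\{v_i\}$ would be independent in $G$'' requires $X$ itself to be independent in $G$. The acceptance test ``$X\cup\{v_i\}$ satisfies $\pi$'' only maintains that $X$ is a $\pi$-set, and a $\pi$-set need not be independent; adding a $G$-isolated vertex to a non-independent $\pi$-set can destroy $\pi$ (this is precisely the situation for SINR feasibility, where independence in $H$ is sufficient but not necessary). The clean repair is to accept $v_i$ iff it has no $G$-neighbor in $X$: then $X$ is always independent in $G$, hence a $\pi$-set by compatibility, and a rejected $v_i$ genuinely has a forward $G$-neighbor in $F$, giving $w_1(X)\ge w(v_i)$. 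With these three corrections your argument coincides with the paper's proof.
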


\begin{proof}
Let $G=(V,E)$ be a $k$-$\pi$-inductive graph with weight function
$w:V\rightarrow \mathbf{R}$.
The algorithm maintains a stack $S$ of nodes, first pushing the nodes
onto the stack, and then popping them off.
\begin{pseudocode}
Let $v_1,v_2,\ldots,v_n$ be the nodes in the $k$-$\pi$-inductive order.
Initialize $\hat{w}(v_i)\leftarrow{}w(v_i)$
for $i\leftarrow{}1$ to $n$ do       // \emph{Push phase}
  if ($\hat{w}(v_i)>0$)
     push $v_i$ on $S$
     for each neighbor $v_j\in{}N(v_i)\cap{}\{v_{i+1},\ldots,v_n\}$ do
        Subtract $\hat{w}(v_i)$ from $\hat{w}(v_j)$
$A\leftarrow\emptyset$
while ($S$ is not empty) do   // \emph{Pop phase}
  $u\leftarrow$pop($S$)
  if ($u\cup{}A$ is a $\pi$-set)
    add $u$ to $A$
output $A$
\end{pseudocode}
Let $A$ be the output of the algorithm and $O$ be an optimal
solution. Let $S$ be the set of vertices in the stack at the end of the
push phase and $S_i$ be the contents of the stack when $v_i$ is being
considered in the push phase. 

We first prove that 
the stack algorithm achieves at least the total weight of the stack.
For a node $v_i$, let $\bar{w}(v_i)$ 
denote the final value of $\hat{w}(v_i)$, which it attains before
iteration $i$.
Then, it holds for each node $v_i$ that
\begin{equation}
 w(v_i) = \bar{w}(v_i) + \sum_{v_j \in S_i \cap N(v_i)} \bar{w}(v_j)\ .
\label{eqn:barw}
\end{equation}
If we sum up for all $v_i \in A$, we have
\begin{equation}
\sum_{v_i \in A} w(v_i) = \sum_{v_i \in A} \bar{w}(v_i) + \sum_{v_i
    \in A} \sum_{v_j \in S_i \cap N(v_i)} \bar{w}(v_j) 
  \ge \sum_{v_t \in S} \bar{w}(v_t),
\label{eqn:a-sbar}
\end{equation}
where the second equality holds because for any $v_t \in S$, we either have
$v_t \in S_i \cap N(v_i)$ for some $v_i \in A$, or we have $v_t \in A$.

Now we prove that the optimal solution achieves at most $k$ times the
weight of the stack. 
If we sum Equation (\ref{eqn:barw}) up for all $v_i \in O$, we have
\begin{equation}
\sum_{v_i \in O} w(v_i) \le \sum_{v_i \in O} \bar{w}(v_i) + \sum_{v_i
    \in O} \sum_{v_j \in S_i \cap N(v_i)} \bar{w}(v_j) 
  \le k \sum_{v_t \in S} \bar{w}(v_t),
\label{eqn:o-sbar}
\end{equation}
where the second inequality holds because when we sum up for all $v_i \in
O$, each of the terms $\bar{w}(v_t)$ for any vertex $v_t \in S$ can appear 
at most $k$ times, since the ordering $v_1, v_2, \ldots, v_n$ is a
$k$-$\pi$-inductive ordering. 
Combining (\ref{eqn:a-sbar}) and (\ref{eqn:o-sbar}), we have
\[ \sum_{v_i \in O} w(v_i) \le k \sum_{v_i \in A} w(v_i) \ . \]
\end{proof}

\end{document}